\documentclass[twocolumn,amsmath,amssymb,superscriptaddress,showpacs]{revtex4}
\usepackage{dcolumn}% Align table columns on decimal point
\usepackage{bm}% bold math and blackboard
\usepackage{amsthm}

\newtheorem{definition}{Definition}%[section]
\newtheorem{theorem}{Theorem}%[section]
%[section]
\newtheorem{lemma}{Lemma}%[section]
%[section]
\newtheorem{proposition}{Proposition}%[section]
%[section]
\newtheorem{condition}{Condition}%[section]
\newtheorem{corollary}{Corollary}%[section]
%[section]
%\usepackage{ifpdf}
\usepackage[pdftex]{graphicx}
\graphicspath{{figures/pdf/}} \DeclareGraphicsExtensions{.pdf}

%%% auxilliary definitions
\def\vec#1{\mathbf{#1}}
\def\gvec#1{{\bm #1}}
\def\sop#1{\mathbf{#1}}
\def\op#1{#1}

\def\ket#1{| #1 \rangle}
\def\bra#1{\langle #1 |}

\def\norm#1{\| #1 \|}

% math operators
\def\diag{\operatorname{diag}}

\def\dim{\operatorname{dim}}
\def\rank{\operatorname{rank}}
\def\Span{\operatorname{span}}

\def\Tr{\operatorname{Tr}}

%\mathcal symbols
\def\A{\mathop{\bf A}\nolimits}
\def\Lb{\mathop{\bf L}\nolimits}
\def\Db{\mathop{\bf D}\nolimits}

\def\D{\mathcal{D}}

\def\H{\mathcal{H}}
\def\I{\mathcal{I}}
\def\L{\mathcal{L}}

\def\S{\mathcal{S}}

%% blackboard symbols
\def\ONE{\mathbb{I}}

\def\RR{\mathbb{R}}
%% mathfrak symbols
\def\DD{\mathfrak{D}}

\def\EE{\mathfrak{E}}
%% other shortcuts

\def\sx{\op{\sigma}_x}
\def\sy{\op{\sigma}_y}
\def\sz{\op{\sigma}_z}

\def\EEss{\EE_{\rm ss}}
\def\EEinv{\EE_{\rm inv}}
\def\EEcc{\EE_{\rm cc}}

\def\DFS{{\rm DFS}}
\def\ss{\rm ss}
\def\lin{\rm lin}
\def\supp{\rm supp}
\begin{document}
\title{Stabilizing Open Quantum Systems by Markovian Reservoir Engineering}
\author{S.~G.~Schirmer}\email{sgs29@cam.ac.uk}
\affiliation{Department of Applied Mathematics and Theoretical
Physics,
             University of Cambridge, Wilberforce Road, Cambridge,
             CB3 0WA, United Kingdom}
\author{Xiaoting Wang}\email{xw233@cam.ac.uk}
\affiliation{Department of Applied Mathematics and Theoretical
Physics,
             University of Cambridge, Wilberforce Road, Cambridge,
             CB3 0WA, United Kingdom}
\date{\today}

\begin{abstract}
We study open quantum systems whose evolution is governed by a
master equation of Kossakowski-Gorini-Sudarshan-Lindblad type and
give a characterization of the convex set of steady states of such
systems based on the generalized Bloch representation.  It is shown
that an isolated steady state of the Bloch equation cannot be a
center, i.e., that the existence of a unique steady state implies
attractivity and global asymptotic stability.  Necessary and
sufficient conditions for the existence of a unique steady state are
derived and applied to different physical models including two- and
four-level atoms, (truncated) harmonic oscillators, and composite
and decomposable systems. It is shown how these criteria could be
exploited in principle for quantum reservoir engineeing via coherent
control and direct feedback to stabilize the system to a desired
steady state.  We also discuss the question of limit points of the
dynamics.  Despite the non-existence of isolated centers, open
quantum systems can have nontrivial invariant sets.  These invariant
sets are center manifolds that arise when the Bloch superoperator
has purely imaginary eigenvalues and are closely related to
decoherence-free subspaces.
\end{abstract}

\pacs{03.65.Yz,42.50.-p,42.50.Dv} %% add suitable pacs numbers
\maketitle

\section{Introduction}

The dynamics of open quantum systems and especially the possibility
of controlling it have attracted significant interest recently.  One
of the fundamental tasks of interest is the stabilization of quantum
states in the presence of dissipation.  In recent years a large
number of articles have been published on control of closed quantum
systems or, more precisely, on systems that only interact coherently
with a controller, with applications from quantum chemistry to
quantum computing~\cite{qph0602014}.  The essential idea in most of
these articles is open-loop Hamiltonian engineering by applying
control theory and optimization techniques.  Although open-loop
control design is a very important tool for controlling quantum
dynamics, it has limitations. For instance, while open-loop
Hamiltonian engineering can be used to mitigate the effects of
decoherence, e.g., using dynamic decoupling
schemes~\cite{prl.82.2417}, or to implement quantum operations on
logical qubits, protected against errors due to environmental
interactions by a redundant encoding~\cite{NJP11n105032},
Hamiltonian engineering has intrinsic limitations.  One task that is
difficult to achieve using Hamiltonian engineering alone is
stabilization of quantum states.

Alternatively, we can try to engineer open quantum dynamics
described by a Lindblad master
equation~\cite{JMP.76821,CommMathPhys.76119} by changing not only
the Hamiltonian terms but also the dissipative terms.  Various ideas
along these lines have been proposed in several
articles~\cite{pra49p2133,pra.64.063810,pra.71.042309,EuropeanPJD.32.257,
pra.76.010301, pra.78.012334}.  There are two major sources of
dissipative terms in the Lindblad equation: the interaction of the
system with its environment, and measurements we choose to perform
on the system.  Accordingly, we can engineer the open dynamics by
either modifying the system's reservoir or by applying a
carefully-designed quantum measurement.  In this sense, the quantum
Zeno effect is a simple model for reservoir
engineering~\cite{Sudarshan-zeno}.  In addition, the open dynamics
can be modified by feeding the measurement outcome (e.g. the
photocurrent from homodyne detection) back to the controller. This
idea was first proposed in~\cite{pra49p2133}, where a
feedback-modified master equation was derived and it was shown
in~\cite{pra.64.063810} that such direct feedback could be used to
stabilize arbitrary single qubit states with respect to a rotating
frame.  More recently, there have been several attempts to extend
this work to stabilize maximally entangled states using direct
feedback~\cite{pra.64.063810,pra.71.042309,
EuropeanPJD.32.257,pra.76.010301, pra.78.012334}.  The idea of
reservoir engineering can also be used to stabilize the system in
the decoherence-free subspace (DFS)~\cite{DFS}.
In~\cite{Almut-drive}, it is illustrated that $N$ atoms in a cavity
can be entangled and driven into a DFS.
In~\cite{PhysRevA.78.042307}, several interesting physical examples
are presented showing how to design the open dynamics such that the
system can be stabilized in the desired dark state.

Such stabilization problems are a motivation for thorough
investigation of the properties of a Lindblad master equation.
Important questions include, for instance, which states can be
stabilized given a certain general evolution of the system and
certain resources.  There are a number of classical articles
discussing the stationary states and their (asymptotic) stability,
as well as sufficient conditions for the existence of a unique
stationary state~\cite{RepMathPhys.10.76189,
LettMathPhys.2.7733,LettMathPhys.2.7779,CommMathPhys.63.78269,
CommMathPhys.19.7083, CommMathPhys.54.77293}.  More recently, a
detailed analysis of the structure of the Hilbert space with respect
to the Lindblad dynamics was carried out
in~\cite{JPA.41.065201,JPA.41.395303}, implying that all stationary
states are contained in a subspace of the Hilbert space that is
attractive.  Necessary and sufficient conditions for the
attractivity of a subspace or a subsystem have been further
considered in~\cite{IEEE.53.082048,arxiv.0809.0613}.  Nonetheless
there are still important issues that deserve further study.  One is
the issue of asymptotic stability of stationary states.  It is often
assumed that uniqueness implies attractivity of a steady state.
Although this turns out to be true for the Lindblad equation, it
does not follow trivially from the linearity of the master equation,
and a rigorous derivation of this result is therefore desirable, as
is a summary of various sufficient conditions for ensuring
uniqueness of a stationary state. Similarly, linear dynamical
systems can have invariant sets or center manifolds surrounding the
set of steady states.  The existence of such invariant sets usually
precludes converges of the system to a steady state, but criteria
for the existence of non-trivial invariant sets are also of interest
as they are natural decoherence-free subspaces. Finally, many
investigations of the steady states have been based on considering
the dynamics on the Hilbert space of the system, e.g., giving
criteria for the attractivity of a subspace of the Hilbert space.
However, since the steady states are points in the convex set of
positive operators on this Hilbert space, such criteria are not
always useful.  For instance, only systems with steady states at the
boundary of the state space (e.g., pure states) have (non-trivial)
attractive subspaces of the Hilbert space.  While these states may
be of special interest, since the states at the boundary form a set
of measure zero, most systems will have steady states in the
interior.  We may not be able to engineer a steady state at the
boundary, but perhaps we could stabilize a state arbitrarily close
to it, which may be entirely sufficient for practical purposes.
Thus, complete characterization of the steady states requires
considering the set of positive operators on the Hilbert space
rather than the Hilbert space itself.

The purpose of this article is twofold: (i) to further investigate
the properties of the stationary states of the Lindblad dynamics and
the invariant set of the dynamics generated by imaginary
eigenvalues, including the relationship between uniqueness and
asymptotic stability and (ii) to present several sufficient
conditions for the existence of a unique steady state, apply them to
different physical models, and show how these criteria could in
principle be used to stabilize an arbitrary quantum state using
Hamiltonian and reservoir engineering.  In
Sec.~\ref{sec:open-system}, we introduce the Bloch representation of
Lindblad dynamics, which will be used throughout the article.  In
this representation, the spectrum of the dynamics can be easily
derived and stability analysis can be conveniently presented.  In
Sec.~\ref{sec:convex}, we characterize the set of all stationary
states as a convex set generated by a finite number of extremal
points, analyze the properties of the extremal points and give
several sufficient conditions for the uniqueness of the stationary
state.  We also state a theorem that uniqueness implies
attractivity, which is proved in the appendix. In
Sec.~\ref{sec:appli} these conditions are applied to different
systems including two and four-level atoms, the quantum harmonic
oscillator, and composite and decomposable systems, and several
useful results are derived, including: (i) if the Lindblad terms
include the annihilation operator, then the system has a unique
stationary state regardless of the other Lindblad terms or the
Hamiltonian; (ii) for a composite system, if the Lindblad equation
contains dissipation terms corresponding to annihilation operators
for each subsystem, then the stationary state is also unique; (iii)
how any pure or mixed state can be stabilized in principle via
Hamiltonian and reservoir engineering. Finally, in
Sec.~\ref{sec:invariant}, we discuss the invariant set generated by
the eigenstates of the dynamics with purely imaginary eigenvalues,
and its relation to decoherence-free subspaces (DFS), including
examples how to find or design a DFS.

\section{Bloch Representation of Open Quantum System Dynamics}
\label{sec:open-system}

Under certain conditions the evolution of a quantum system
interacting with its environment can be described by a quantum
dynamical semigroup and shown to satisfy a Lindblad master equation
\begin{equation}
 \label{eq:LME}
  \dot\rho(t) = -i[H,\rho(t)] + \L_D\rho(t) \equiv \L \rho(t),
\end{equation}
where $\rho(t)$ is positive unit-trace operator on the system's
Hilbert space $\H$ representing the state of the system, $H$ is a
Hermitian operator on $\H$ representing the Hamiltonian,
$[A,B]=AB-BA$ is the commutator, and
$\L_D\rho(t)=\sum_d\D[V_d]\rho(t)$, where $\op{V}_d$ are operators
on $\H$ and
\begin{equation}
\label{eq:D}
  \D[V_d] \rho(t) = V_d \rho(t) V_d^\dagger
                   - \frac{1}{2}(V_d^\dagger V_d \rho(t)
                   + \rho(t) V_d^\dagger V_d).
\end{equation}
In this work we will consider only open quantum systems governed by
a Lindblad master equation, evolving on a finite-dimensional Hilbert
space $\H\simeq\mathbb{C}^N$.

From a mathematical point of view Eq.~(\ref{eq:LME}) is a complex
matrix differential equation (DE).  To use dynamical systems tools
to study its stationary solutions and the stability, it is desirable
to find a real representation for (\ref{eq:LME}) by choosing an
orthonormal basis $\gvec{\sigma}=\{\sigma_k\}_{k=1}^{N^2}$ for all
Hermitian matrices on $\H$.  Although any orthonormal basis will do,
we shall use the generalized Pauli matrices, suitably normalized,
setting $\sigma_k=\lambda_{rs}$, $k=r+(s-1)N$ and $1\le r<s\le N$,
where
\begin{subequations}
  \label{eq:pauliN}
  \begin{align}
    \lambda_{rs} &= \textstyle \frac{1}{\sqrt{2}}(\ket{r}\bra{s} +
\ket{s}\bra{r}), \\
    \lambda_{sr} &= \textstyle \frac{1}{\sqrt{2}}(-i\ket{r}\bra{s} +i
\ket{s}\bra{r}), \\
    \lambda_{rr}  &= \textstyle \frac{1}{\sqrt{r+r^2}}
          \left(\sum_{k=1}^r \ket{k}\bra{k} - r\ket{r+1}\bra{r+1} \right).
\end{align}
\end{subequations}
The state of the system $\rho$ can then be represented as a real
vector $\vec{r}=(r_k)\in\RR^{N^2}$ of coordinates with respect to
this basis $\{\sigma_k\}$,
\begin{align*}
  \rho=\sum_{k=1}^{N^2}r_k\sigma_k=\sum_{k=1}^{N^2}\Tr(\rho\sigma_k)\sigma_k
\end{align*}
and the Lindblad dynamics (\ref{eq:LME}) rewritten as a real DE:
\begin{equation}
  \label{eq:DE}
  \dot{\vec{r}}=(\Lb+\sum\nolimits_d \Db^{(d)})\vec{r},
\end{equation}
where $\Lb$, $\Db^{(d)}$ are real $N^2\times N^2$ matrices with
entries
\begin{subequations}
\label{eqn:LD}
\begin{align}
   L_{mn}       &= \Tr(i H [\sigma_m,\sigma_n]), \\
   D_{mn}^{(d)} &= \Tr(V_d^\dag \sigma_m V_d \sigma_n)
                   -\frac{1}{2} \Tr(V_d^\dag V_d \{\sigma_m,\sigma_n\}),
\end{align}
\end{subequations}
$\{A,B\}=AB+BA$ being the usual anticommutator.  As
$\sigma_{N^2}=\frac{1}{\sqrt{N}}\ONE$, we have $\dot{r}_N=0$, and
(\ref{eqn:LD}) can be reduced to the dynamics on an
$(N^2-1)$-dimensional subspace,
\begin{equation}
  \label{eq:bloch}
  \dot{\vec{s}}(t) = \A \vec{s}(t) + \vec{c}.
\end{equation}
This is an affine-linear matrix DE in the state vector
$\vec{s}=(r_1,\ldots,r_{N^2-1})^T$. $\A$ is an $(N^2-1)\times
(N^2-1)$ real matrix with $A_{mn}= L_{mn}+ \sum_d D_{mn}^{(d)}$ and
$\vec{c}$ a real column vector with $c_m= L_{mN}+\sum_d
D_{mN}^{(d)}$.  Notice that this essentially is the $N$-dimensional
generalization of the standard Bloch equation for a two-level
system, and we will henceforth refer to $\A$ as the Bloch operator.
The advantage of this representation is that all information of $H$
and $V$ is contained in $\A$ and $\vec{c}$ and it is easy to perform
a stability analysis of the Lindblad dynamics in matrix-vector
form~\cite{PhysRevA.79.052326}.  Defining $\tilde \A=\Lb+\sum_d
\Db^{(d)}$, we have the following relation:
\begin{align*}
\tilde \A=
\begin{pmatrix}
\A &   \sqrt{N}\vec{c} \\
\boldsymbol{0}^T    & 0
\end{pmatrix}.
\end{align*}
Since $\Tr(\rho^2)\le 1$ for any physical state $\rho$, the Bloch
vector $\vec{s}$ must satisfy $\norm{\vec{s}}\le \sqrt{(N-1)/N}$,
i.e. all physical states lie in a ball of radius $R=\sqrt{(N-1)/N}$.
Note that for $N=2$ the embedding into of the physical states into
this ball is surjective, i.e., the set of physical states is the
entire Bloch ball, but this is no longer true for $N>2$.

\section{Characterization of the Stationary States}
\label{sec:convex}

A state $\rho$ is a steady or stationary state of a dynamical system
if $\dot{\rho}=0$.  Steady states are interesting both from a
dynamical systems point of view, as well as for applications such as
stabilizing the system in a desired state.  Let
$\EEss=\{\rho|\dot{\rho}=\L(\rho)=0\}$ be the set of steady states
for the dynamics given by~(\ref{eq:LME}).  As (\ref{eq:LME}) is
linear in $\rho$, $\EEss$ inherits the property of convexity from
the set of all quantum states.  $\EEss$ includes special cases such
as the so-called dark states, which are pure states
$\rho=\ket{\psi}\bra{\psi}$ satisfying $[H,\rho]=\L_D(\rho)=0$.  For
some systems it is easy to see that there are steady states, and
what these are.  For a Hamiltonian system ($\L_D\equiv 0$) it is
obvious from Eq.~(\ref{eq:LME}), for instance, that the steady
states are those that commute with the Hamiltonian, i.e.,
$\EEss=\{\rho:[H,\rho]=0\}$. Similarly, for a system with $H=0$
subject to measurement of the Hermitian observable $M$, the master
equation~(\ref{eq:LME}) can be rewritten as
$\dot\rho=\D[M]\rho=-\frac{1}{2}[M,[M,\rho]]$, and we can show that
$\EEss=\{\rho:[M,\rho]=0\}$.  In general, assuming $\vec{s}_0$ is
the Bloch vector associated with a particular steady state, the set
of steady states $\EEss$ for a system governed by a
LME~(\ref{eq:LME}) can be written as
$\{\vec{s}_0:\A\vec{s}_0+\vec{c}=\vec{0}\}$ in the Bloch
representation.  This is a convex subset of the affine hyperplane
$\EEss^{\lin}=\{\vec{s}_0+\vec{v}\}$ in $\RR^{N^2-1}$, where
$\vec{v}$ satisfies $\A \vec{v}=0$. Moreover, using Brouwer's Fixed
Point Theorem, we can show that the set of steady states $\EEss$ is
always non-empty (see Appendix~\ref{app:ss1}) and we have:

\begin{proposition}
\label{proposition:existence} The Lindblad master
equation~(\ref{eq:LME}) always has a steady state, i.e., the Bloch
equation $\A\vec{s}_0+\vec{c}=\vec{0}$ always has a solution and
$\rank(\A)=\rank(\tilde \A)$, where $\tilde \A$ is the matrix $\A$
horizontally concatenated by the column vector $\vec c$.
\end{proposition}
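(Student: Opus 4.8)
The plan is to produce an explicit steady state using the quantum dynamical semigroup generated by~(\ref{eq:LME}), and then read off the rank identity as a statement about solvability of a linear system. Write $\Phi_t:=e^{t\L}$ for the flow of~(\ref{eq:LME}) and let $\DD$ denote the set of density operators on $\H$. Since $\dim\H=N<\infty$, $\DD$ is compact and convex, and via the Bloch map $\rho\mapsto\vec{s}$ it is identified with a compact convex subset of $\RR^{N^2-1}$. It is a standard fact about the Lindblad/GKS form that $\{\Phi_t\}_{t\ge0}$ is a completely positive, trace-preserving semigroup, so each $\Phi_t$ maps $\DD$ into $\DD$ and is continuous (in fact linear). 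Brouwer's Fixed Point Theorem therefore applies to $\Phi_t:\DD\to\DD$ and yields, for every $t\ge0$, a fixed point $\rho_t\in\DD$ with $\Phi_t(\rho_t)=\rho_t$.

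A fixed point of a single $\Phi_t$ need not be a steady state, so the crux is to extract a \emph{common} fixed point of the whole semigroup. For this I would run a nested-sets argument. Set $F_n:=\{\rho\in\DD:\Phi_{2^{-n}}(\rho)=\rho\}$. Each $F_n$ is nonempty by the previous paragraph and is closed, being the fixed-point set of a continuous self-map of a compact Hausdorff space; moreover $\Phi_{2^{-n}}=(\Phi_{2^{-n-1}})^2$ forces $F_{n+1}\subseteq F_n$. A decreasing sequence of nonempty closed subsets of the compact set $\DD$ has nonempty intersection, so we may pick $\rho_\ast\in\bigcap_{n}F_n$. Then $\Phi_{k2^{-n}}(\rho_\ast)=\rho_\ast$ for every dyadic rational $k2^{-n}\ge0$, and because $t\mapsto\Phi_t(\rho_\ast)$ is continuous and the dyadic rationals are dense in $[0,\infty)$, we get $\Phi_t(\rho_\ast)=\rho_\ast$ for all $t\ge0$. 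Differentiating at $t=0$ gives $\L(\rho_\ast)=0$, hence $\rho_\ast\in\EEss$ and $\EEss\ne\emptyset$. (Alternatively, the Markov--Kakutani theorem applies verbatim, since $\{\Phi_t\}_{t\ge0}$ is a commuting family of continuous affine self-maps of the compact convex set $\DD$.)

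It remains to pass to the Bloch equation. If $\vec{s}_0\in\RR^{N^2-1}$ is the Bloch vector of the steady state $\rho_\ast$, then $\L(\rho_\ast)=0$ is equivalent to $\A\vec{s}_0+\vec{c}=\vec{0}$, so the affine equation is solvable. Finally, a linear system $\A\vec{x}=-\vec{c}$ is consistent precisely when $-\vec{c}$ lies in the column space of $\A$, which is exactly the condition $\rank(\A)=\rank(\tilde\A)$ for $\tilde\A=[\,\A\mid\vec{c}\,]$ (Rouch\'e--Capelli); this yields the stated rank equality.

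The step I expect to carry the real weight is the passage from ``every $\Phi_t$ has a fixed point'' to ``the semigroup has a common fixed point'': a single application of Brouwer only produces a $t$-dependent $\rho_t$, and one genuinely needs the compactness of $\DD$ together with the semigroup law (or, equivalently, a commuting-family fixed-point theorem) to obtain a bona fide stationary state. Complete positivity of $\Phi_t$, compactness and convexity of $\DD$, and the linear-algebra reformulation are all routine by comparison.
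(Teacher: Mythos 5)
Your proof is correct and follows essentially the same route as the paper's Appendix~\ref{app:ss1}: Brouwer's fixed point theorem applied to the flow at a fixed time, a nested sequence of compact fixed-point sets with nonempty intersection (Cantor), and density of dyadic times plus continuity to upgrade to a common fixed point of the semigroup. Your explicit finish via the Rouch\'e--Capelli criterion simply spells out the rank equality that the paper leaves implicit.
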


As any convex set is the convex hull of its extremal points, we
would like to characterize the extremal points of $\EEss$.  A point
in a convex set is called extremal if it cannot be written as a
convex combination of any other points.  See Fig.~\ref{fig:sets} for
illustration of convex sets and extremal points.  To this end, let
$\supp(\rho)$ be the smallest subspace $\S$ of $\H$ such that
$\Pi^\perp \rho \Pi^\perp=0$, where $\Pi$ is the projector onto the
subspace $\S$ and $\Pi^\perp$ is the projector onto the orthogonal
complement of $\S$ in $\H$.

\begin{figure}
\includegraphics[width=\columnwidth]{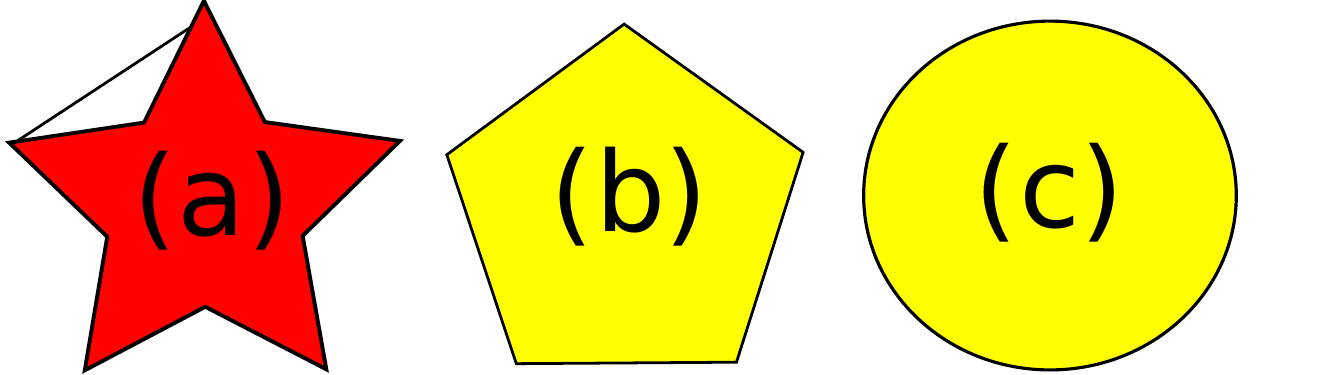} \caption{(Color online) (a)
Non-convex set as a line segment connecting two points in the set is
not contained in the set.  (b) Convex set spanned by five unique
extremal points given by the vertices of the polygon.  (c) Convex
set with infinitely many extremal points comprising the entire
boundary.} \label{fig:sets}
\end{figure}

\begin{proposition}
The steady state of $\EEss$ is extremal if and only if it is the
unique steady state in its support.
\end{proposition}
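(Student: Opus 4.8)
The plan is to prove both directions of the "if and only if" by relating extremality in the convex set $\EEss$ to a restricted‑support version of the steady‑state problem. The key object is, for a steady state $\rho$ with $\S = \supp(\rho)$ and projector $\Pi$, the map $\rho \mapsto \Pi \rho \Pi$ restricted to operators supported on $\S$. I would first establish the elementary structural fact that if $\rho \in \EEss$ and $\sigma$ is any state with $\supp(\sigma) \subseteq \supp(\rho)$, then $\sigma$ behaves well under the dynamics: concretely, that the "corner" $\S$ of $\H$ is invariant in an appropriate sense, so that $\L(\sigma)$ is again supported on $\S$, and the steady states of $\L$ that live inside $\S$ form a face of $\EEss$ containing $\rho$ in its relative interior.

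For the \emph{easy} direction ($\Leftarrow$), suppose $\rho$ is the \emph{unique} steady state with support contained in $\S=\supp(\rho)$, yet $\rho = \tfrac12(\rho_1 + \rho_2)$ with $\rho_1,\rho_2 \in \EEss$ distinct. Since $\rho_1,\rho_2 \ge 0$ and their average is $\rho$, we get $\supp(\rho_i) \subseteq \supp(\rho) = \S$ for $i=1,2$. But then $\rho_1,\rho_2$ are both steady states supported in $\S$, contradicting uniqueness; hence $\rho$ is extremal. This step is short and purely order‑theoretic, using only that the support of a convex combination of positive operators is the span of the individual supports.

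For the \emph{harder} direction ($\Rightarrow$), suppose $\rho$ is extremal in $\EEss$ but there is a \emph{second} steady state $\sigma \ne \rho$ with $\supp(\sigma) \subseteq \S$. The goal is to manufacture two distinct steady states whose average is $\rho$, contradicting extremality. The natural move is to push $\sigma$ slightly \emph{past} $\rho$: because $\rho$ has \emph{full} support on $\S$ (i.e. $\Pi\rho\Pi$ is strictly positive on $\S$) while $\sigma$ is merely supported in $\S$, the operator $\rho + \epsilon(\rho - \sigma)$ stays positive for small $\epsilon > 0$ — this is exactly the standard fact that an interior point of a face can be perturbed in the direction of any other point of that face and remain in the cone. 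Since $\L$ is linear and $\L(\rho) = \L(\sigma) = 0$, the perturbed operator is again a steady state; normalizing trace if necessary, call it $\rho_+$. Then $\rho = \tfrac{1}{2-?}\,(\cdots)$ — more cleanly, $\rho$ is a nontrivial convex combination of $\rho_+$ and $\sigma$ (or of $\rho_+$ and $\rho_-:=\rho-\epsilon(\rho-\sigma)$), both of which lie in $\EEss$, so $\rho$ is not extremal. This contradiction completes the proof.

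\textbf{The main obstacle} will be the positivity‑perturbation argument in the $\Rightarrow$ direction: one must argue carefully that "full support on $\S$" for $\rho$ plus "support contained in $\S$" for $\sigma$ genuinely buys an $\epsilon > 0$ with $\rho - \epsilon(\sigma-\rho) \ge 0$. The clean way is to work within $\B(\S)$, note $\Pi\rho\Pi$ has a strictly positive least eigenvalue $\mu > 0$ there, bound $\|\Pi(\sigma-\rho)\Pi\|_{op}$, and take $\epsilon < \mu/\|\sigma-\rho\|_{op}$; one also needs that $\Pi^\perp\sigma\Pi^\perp = 0$ forces the off‑diagonal blocks of $\sigma$ (relative to $\S \oplus \S^\perp$) to vanish as well, which follows from positivity of $\sigma$. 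The trace‑normalization is a cosmetic nuisance: $\rho$ and $\sigma$ both have trace one, so $\rho\pm\epsilon(\sigma-\rho)$ automatically has trace one, and no renormalization is actually needed. Everything else is bookkeeping with the correspondence between $\EEss$ and solutions of $\A\vec{s}_0 + \vec{c} = \vec{0}$ established above.
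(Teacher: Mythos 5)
Your proof is correct, and its two halves relate to the paper's as follows. Your ($\Leftarrow$) direction is essentially the paper's: both reduce to the fact (Lemma~\ref{lemma:app1} of the appendix) that if a state is a nontrivial convex combination of positive operators, the supports of the components lie inside its support; your restriction to midpoint combinations is harmless, since the same argument works verbatim for any weight $a\in(0,1)$. For the ($\Rightarrow$) direction you take a genuinely different route: the paper argues by rank comparison, asserting that an extremal point must have strictly smaller rank than a convex combination of it with a second steady state in its support, which contradicts the fact that this combination is again supported in $\supp(\rho_0)$; you instead use the standard face-perturbation argument, noting that $\rho$ is strictly positive on $\S=\supp(\rho)$ with least eigenvalue $\mu>0$, that positivity of $\sigma$ forces its off-diagonal blocks relative to $\S\oplus\S^\perp$ to vanish, and hence that $(1+\epsilon)\rho-\epsilon\sigma\ge 0$ for $\epsilon<\mu$; linearity of $\L$ and automatic trace preservation then exhibit $\rho$ as a nontrivial convex combination of two distinct elements of $\EEss$, contradicting extremality. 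Your version is arguably the more rigorous one: the paper's claim that ``the rank of the extremal point is the smallest among its neighboring points'' is stated loosely, whereas your eigenvalue bound makes the needed strictness explicit; what the paper's phrasing buys is brevity, since it leans on the same support-monotonicity lemma for both directions. The preliminary remarks in your first paragraph about invariance of $\S$ under $\L$ are not actually needed for either direction (the competing states are steady by hypothesis), so you could simply delete them.
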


\begin{proof}
Since any convex set is the convex hull of its extremal points, the
rank of the extremal point is the smallest among its neighboring
points, and the rank of boundary points is smaller than that of
points in the interior. Suppose that besides the extremal steady
state $\rho_0$, there is another steady state $\rho_1$ in the
subspace $\supp(\rho_0)$. Then any state $\rho_2$ which is a convex
combination of $\rho_0$ and $\rho_1$ must also be in
$\supp(\rho_0)$.  However, since $\rho_0$ is an extremal point, the
rank of $\rho_0$, which is equal to the dimension of
$\supp(\rho_0)$, must be lower than the rank of $\rho_2$, which is
impossible.  Conversely, let $\rho_s$ be the unique steady state in
its support.  Suppose it is not an extremal point, which means that
there exist $\rho_1$ and $\rho_2$ with $\rho_s=a\rho_1+(1-a)\rho_2$,
$a>0$. From Lemma~\ref{lemma:app1} in Appendix~\ref{app:extremals},
$\rho_1$ and $\rho_2$ also lie in $\supp(\rho_s)$, a contradiction
to uniqueness of steady states in $\supp(\rho_s)$.
\end{proof}

We call a subspace $\S$ invariant if any dynamical flow with initial
state in $\S$ remains in $\S$.  It has been shown that if
$\rho_{\ss}$ is a steady state then $\supp(\rho_{\ss})$ is
invariant~\cite{JPA.41.395303,arxiv.0809.0613}.  Furthermore,
Proposition~\ref{proposition:existence} shows that any invariant
subspace contains at least one steady state.  Thus, if $\rho_{\ss}$
is an extremal point of $\EEss$ then $\supp(\rho_{\ss})$ is a
\emph{minimal} invariant subspace of the Hilbert space $\H$, i.e.,
there does not exist a proper subspace of $\supp(\rho_k)$ that is
invariant under the dynamics.  It can also be shown that
$\supp(\rho_{\ss})$ is attractive as a subspace of $\H$, and
$\supp(\rho_{\ss})$ has been called a minimal collecting subspace
in~\cite{JPA.41.395303}.

Different extremal steady states generally do not have orthogonal
supports. For example, for a two level-system governed by the
trivial Hamiltonian dynamics $H=0$, $\EEss$ is equal to the convex
set of all states on $\H$, all pure states are extremal points, and
it is easy to see that two arbitrary pure states generally do not
have orthogonal supports.  Just consider the pure states
$\rho_1=\ket{0}\bra{0}$ and
$\rho_2=\frac{1}{2}(\ket{0}+\ket{1})(\bra{0}+\bra{1})$, which are
extremal states but $\supp(\rho_1)\not\perp\supp(\rho_2)$.  However,
in this case there is another extremal steady state
$\rho_3=\ket{1}\bra{1}$ with $\supp(\rho_3) \subset
\supp(\rho_1)+\supp(\rho_2)$ and $\supp(\rho_3) \perp
\supp(\rho_1)$.  In general, given two extremal steady states
$\rho_1$ and $\rho_2$, we have either
$\supp(\rho_1)\perp\supp(\rho_2)$, or there exists another extremal
steady state $\rho_3$ with $\supp(\rho_3)\subset\supp(\rho_1)+
\supp(\rho_2)$ such that $\supp(\rho_1)\perp\supp(\rho_3)$. That is
to say, given an extremal steady state $\rho_1$, if there exist
other steady states, then we can always find another extremal steady
state $\rho_3$ whose support is orthogonal to that of $\rho_1$,
$\supp(\rho_1)\perp\supp(\rho_3)$.  Finally, let $\H_{\ss}$ be the
union of the supports of all steady states $\rho_{\ss}$.  It can be
shown (see, e.g.,~\cite{JPA.41.395303}) that we can choose a finite
number of extremal steady states $\rho_k$ with orthogonal supports,
such that $\H_s=\oplus_{k=1}^K \supp(\rho_k)$.  This decomposition
is generally not unique, however.  In the above example, any two
orthonormal vectors of $\H$ provide a valid decomposition of
$\H_{\ss}=\H$, and no basis is preferable.  Therefore, such a
decomposition of $\H_{\ss}$ is not necessarily physically
meaningful, but it does give the following useful result:

\begin{proposition}
\label{prop:two-steady-states} If a system governed by a
LME~(\ref{eq:LME}) has two steady states, then there exist two
proper orthogonal subspaces of $\H$ that are both invariant.
\end{proposition}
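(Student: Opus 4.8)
The plan is to read the statement off the orthogonal decomposition of $\H_{\ss}$ described in the paragraph immediately preceding the proposition, combined with the cited fact~\cite{JPA.41.395303,arxiv.0809.0613} that the support of any steady state is an invariant subspace. Recall that one can always choose finitely many \emph{extremal} steady states $\rho_1,\dots,\rho_K$ with mutually orthogonal supports such that $\H_{\ss}=\bigoplus_{k=1}^{K}\supp(\rho_k)$, and that $K\ge 1$ by Proposition~\ref{proposition:existence}. So the whole proof reduces to showing $K\ge 2$ under the hypothesis and then checking the two listed properties for $\supp(\rho_1)$ and $\supp(\rho_2)$.

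First I would rule out $K=1$. If $K=1$ then $\H_{\ss}=\supp(\rho_1)$, so every $\sigma\in\EEss$ satisfies $\supp(\sigma)\subseteq\supp(\rho_1)$; but $\rho_1$ is extremal, hence by the characterization of extremal steady states (extremal $\iff$ unique steady state in its support) it is the \emph{only} steady state supported on $\supp(\rho_1)$, forcing $\sigma=\rho_1$. Thus $K=1$ would make the steady state unique, contradicting the assumption of two distinct steady states; hence $K\ge2$. Now set $\S_1:=\supp(\rho_1)$ and $\S_2:=\supp(\rho_2)$. These are orthogonal by construction; each is invariant, being the support of a steady state; and each is proper, since both are nonzero (supports of density operators) while $\S_2\subseteq\S_1^{\perp}$ and $\S_1\subseteq\S_2^{\perp}$ force $\S_1\ne\H$ and $\S_2\ne\H$. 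That completes the argument.

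The only genuine content — and the step I expect to need the most care — is the existence of the orthogonal support decomposition of $\H_{\ss}$ into supports of extremal steady states; this in turn relies on the orthogonalization property stated just above (for any two extremal steady states $\rho_a,\rho_b$ one either has $\supp(\rho_a)\perp\supp(\rho_b)$, or there is an extremal steady state $\rho_c$ with $\supp(\rho_c)\subseteq\supp(\rho_a)+\supp(\rho_b)$ and $\supp(\rho_c)\perp\supp(\rho_a)$) together with compactness of $\EEss$, which guarantees enough extremal points. If one prefers to avoid invoking the full decomposition, essentially the same proof works directly: pick any extremal steady state $\rho_1$ (it exists since $\EEss$ is a nonempty compact convex set), and since by hypothesis there is a second steady state, apply the orthogonalization property to obtain an extremal steady state $\rho_3$ with $\supp(\rho_3)\perp\supp(\rho_1)$; then $\supp(\rho_1)$ and $\supp(\rho_3)$ are the required proper orthogonal invariant subspaces by the properness/invariance argument above.
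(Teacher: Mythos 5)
Your argument is correct and follows essentially the same route as the paper, which obtains this proposition directly from the orthogonal-support decomposition $\H_{\ss}=\oplus_k\supp(\rho_k)$ (Theorem~\ref{thm:Hs_decomp}) together with the orthogonalization property of extremal steady states and the invariance of supports of steady states. Your additional step ruling out $K=1$ via the extremality characterization is a worthwhile explicit filling-in of what the paper leaves implicit, but it is not a different method.
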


In addition to the characterization of $\EEss$ from the supports of
its extremal points, it is also useful to characterize the steady
states from the structure of the dynamical operators $H$ and $V_d$
in the LME~(\ref{eq:LME}).

\begin{proposition}
\label{thm:ss_bd} If $\rho$ is a steady state at the boundary then
its support $\S=\supp(\rho)$ is an invariant subspace for each of
the Lindblad operators $V_d$.
\end{proposition}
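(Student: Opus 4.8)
The plan is to work with the steady-state condition $\L(\rho)=0$ in operator form and to compress it with the projector onto the orthogonal complement of $\S$. Write $\Pi$ for the orthogonal projector onto $\S=\supp(\rho)$ and $\Pi^\perp=\ONE-\Pi$. By the definition of the support, $\Pi\rho\Pi=\rho$, hence $\rho\Pi^\perp=\Pi^\perp\rho=0$, and $\rho$ restricted to $\S$ is invertible (it has full rank on its own support).

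First I would apply the map $X\mapsto\Pi^\perp X\Pi^\perp$ to both sides of $-i[H,\rho]+\sum_d\D[V_d]\rho=0$. Since $\rho$ is annihilated by $\Pi^\perp$ on either side, the commutator term $\Pi^\perp(H\rho-\rho H)\Pi^\perp$ vanishes, and so do the two anticommutator pieces $\Pi^\perp V_d^\dagger V_d\rho\Pi^\perp$ and $\Pi^\perp\rho V_d^\dagger V_d\Pi^\perp$ of each $\D[V_d]\rho$. What survives is $\sum_d\Pi^\perp V_d\rho V_d^\dagger\Pi^\perp=0$, and inserting $\rho=\Pi\rho\Pi$ rewrites this as $\sum_d(\Pi^\perp V_d\Pi)\,\rho\,(\Pi^\perp V_d\Pi)^\dagger=0$.

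Next I would invoke positivity: each summand $(\Pi^\perp V_d\Pi)\rho(\Pi^\perp V_d\Pi)^\dagger$ has the form $X\rho X^\dagger$ with $\rho\ge 0$, hence is positive semidefinite, so a vanishing sum forces every term to vanish individually, $(\Pi^\perp V_d\Pi)\rho(\Pi^\perp V_d\Pi)^\dagger=0$ for each $d$. Finally, $X\rho X^\dagger=0$ gives $\norm{\rho^{1/2}X^\dagger\ket{\psi}}=0$ for every $\ket{\psi}$, hence $\rho^{1/2}X^\dagger=0$; thus the range of $X^\dagger$ lies in $\ker\rho^{1/2}=\S^\perp$, but $X^\dagger=\Pi V_d^\dagger\Pi^\perp$ also has range inside $\S$, so $X^\dagger=0$, i.e.\ $\Pi^\perp V_d\Pi=0$. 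Equivalently $V_d\ket{\psi}\in\S$ for every $\ket{\psi}\in\S$, which is the asserted invariance of $\S$ under each $V_d$.

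The computation is short; the only step that needs care is the last one, where the positive-semidefiniteness of the individual terms must be combined with the full rank of $\rho$ on its own support to conclude $\Pi^\perp V_d\Pi=0$ rather than merely a statement about the range of $\rho$. (The hypothesis that $\rho$ lies on the boundary is only what makes $\S$ a proper subspace and the conclusion nonvacuous; the argument is otherwise insensitive to it. One could alternatively deduce invariance of $\S$ under the full generator from the cited results and then extract the individual $V_d$, but the direct route above is cleaner.)
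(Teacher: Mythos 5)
Your proof is correct and is essentially the paper's argument in basis-free form: your compression $\Pi^\perp(\cdot)\Pi^\perp$ of the steady-state equation is exactly the $\dot{R}_{22}$ block equation the paper obtains after a unitary change of basis adapted to $\supp(\rho)$, and your positivity-plus-full-rank step matches the paper's conclusion $V_{21}^{(d)}=0$ from $\sum_d V_{21}^{(d)}R_{11}(V_{21}^{(d)})^\dagger=0$ with $R_{11}$ strictly positive. The only differences are presentational (projectors instead of explicit block matrices) and that the paper additionally records the $\dot{R}_{12}$ condition, which is not needed for this proposition but is reused later.
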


\begin{proof}
A density operator $\rho$ belongs to the boundary of $\DD(\H)$ if it
has zero eigenvalues, i.e., if $\rank(\rho)=N_1<N$.  In this case,
there exists a unitary operator $U$ such that
\begin{equation}
  \label{eq:R11}
  \tilde{\rho} = U \rho U^\dag =
   \begin{bmatrix} R_{11} & R_{12} \\ R_{12}^\dag & R_{22} \end{bmatrix}
\end{equation}
where $R_{11}$ is an $N_1\times N_1$ matrix with full rank, and
$R_{12}$ and $R_{22}$ are $N_2\times N_1$ and $N_2\times N_2$
matrices with zero entries and $N_2=N-N_1=\dim\ker(\rho)$, and
\begin{equation}
  \dot{\tilde{\rho}}(t)
  = -i[\tilde{H},\tilde{\rho}(t)] + \sum_d \D[\tilde{V}_d]\tilde{\rho}(t)
\end{equation}
with $\tilde{H}=UHU^\dag$ and $\tilde{V}_d=UV_dU^\dag$. Partitioning
\begin{equation}
  \label{eq:HVpart}
  \tilde{H} = \begin{bmatrix} H_{11} & H_{12} \\
                              H_{12}^\dag & H_{22}
              \end{bmatrix},
  \quad
  \tilde{V}_d = \begin{bmatrix} V_{11}^{(d)} & V_{12}^{(d)} \\
                                V_{21}^{(d)} & V_{22}^{(d)}
                \end{bmatrix},
\end{equation}
accordingly, it can be verified that a necessary and sufficient
condition for $\rho$ to be a steady state of the system is that
$\dot{R}_{11}=\dot{R}_{12}=\dot{R}_{22}=0$, where
\begin{subequations}
\label{eq:ss_bd}
\begin{align}
 \dot{R}_{11} &= -i[H_{11},R_{11}] + \sum_d\D[V_{11}^{(d)}] R_{11}, \\
 \dot{R}_{12} &= -\frac{1}{2} R_{11}\sum_d (V_{11}^{(d)})^\dag V_{12}^{(d)}+i
R_{11}H_{12}, \\
 \dot{R}_{22} &= \sum_d V_{21}^{(d)} R_{11} (V_{21}^{(d)})^\dag.
\end{align}
\end{subequations}
Since $R_{11}$ is a positive operator with full rank and hence
strictly positive, the third equation requires $V_{21}^{(d)}=0$ for
all $d$.  The second equation is $R_{11}X=0$ for
$X=-\frac{1}{2}\sum_d (V_{11}^{(d)})^\dag V_{12}^{(d)}+iH_{12}$,
which shows that it will be satisfied if and only if the $N_1\times
N_2$ matrix $X$ vanishes identically, which gives the equivalent
conditions
\begin{subequations}
\label{eq:ss_bd2}
\begin{align}
  0 &= -i[H_{11},R_{11}] + \sum_d \D[V_{11}^{(d)}] R_{11},   \\
  0 &= -\frac{1}{2} \sum_d (V_{11}^{(d)})^\dag V_{12}^{(d)}+i H_{12},\\
  0 &= V_{21}^{(d)} \quad \forall d.
\end{align}
\end{subequations}
The last equation implies that if $\rho$ is a steady state at the
boundary then all $V_d$ have a block tridiagonal structure and map
operators defined on $\S=\supp(\rho)$ to operators on $\S$, i.e.,
$\S$ is an invariant subspace for all $V_d$.
\end{proof}

The following theorem (proved in Appendix~\ref{app:no_iso_centers})
shows furthermore that uniqueness implies asymptotic stability:

\begin{theorem}
\label{thm:unique_ss} A steady state of the LME~(\ref{eq:LME}) is
attractive, i.e., all other solutions converge to it, if and only if
it is unique.
\end{theorem}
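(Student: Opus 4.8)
The plan is to prove the two implications separately; only one is substantial. If a steady state $\rho_{\ss}$ is attractive it must be unique, since a second steady state $\rho_1$ gives the constant solution $\rho(t)\equiv\rho_1$, which cannot converge to $\rho_{\ss}\neq\rho_1$. The converse is the content of the theorem, and the point to keep in mind is that it does \emph{not} follow from linearity alone: an isolated fixed point of a linear flow can be a center surrounded by periodic orbits, and excluding this possibility is exactly where the complete positivity of the Lindblad semigroup has to enter.

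So suppose $\rho_{\ss}$ is the unique steady state. First I would reduce to its support. Since $\EEss=\{\rho_{\ss}\}$, the state $\rho_{\ss}$ is trivially extremal, so $\S=\supp(\rho_{\ss})=\H_{\ss}$ is invariant and, as recalled in Sec.~\ref{sec:convex}, attractive as a subspace of $\H$. Using the block structure forced by Proposition~\ref{thm:ss_bd}, one checks that $\L$ maps operators supported on $\S$ to operators supported on $\S$, so it restricts to a Lindblad generator $\L_\S$ on $\B(\S)$; its steady states are exactly the steady states of $\L$, all of which are supported on $\S$, so $\rho_{\ss}$ remains the unique steady state of $\L_\S$, and now it is \emph{faithful} (full rank on $\S$). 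In the Bloch representation on $\S$ the dynamics reads $\dot{\vec{s}}=\A_\S\vec{s}+\vec{c}_\S$, and uniqueness is equivalent to $\ker\A_\S=\{0\}$, i.e. $0\notin\spec(\A_\S)$, as in Proposition~\ref{proposition:existence} and the surrounding discussion. Since $e^{t\L_\S}$ is positive and trace preserving it is a trace-norm contraction on Hermitian operators, so every eigenvalue of $\A_\S$ has $\Re\lambda\le0$. It therefore suffices to show that $\A_\S$ has no eigenvalue on the imaginary axis other than $0$: then $\A_\S$ is Hurwitz, $e^{t\A_\S}\to0$, and every solution supported on $\S$ converges to $\rho_{\ss}$.

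The crux --- and the step I expect to be the main obstacle --- is excluding a nonzero purely imaginary eigenvalue $i\omega$ of $\L_\S$. I would argue through the Heisenberg-picture generator $\L_\S^\dagger$, whose semigroup $\hat T_t=e^{t\L_\S^\dagger}$ is unital and completely positive with $\hat T_t$-invariant faithful state $\rho_{\ss}$. Pick $X\neq0$ with $\hat T_t(X)=e^{i\omega t}X$. The Kadison--Schwarz inequality gives $\hat T_t(X^\dagger X)\ge\hat T_t(X)^\dagger\hat T_t(X)=X^\dagger X$, while stationarity gives $\Tr(\rho_{\ss}\hat T_t(X^\dagger X))=\Tr(\rho_{\ss}X^\dagger X)$; faithfulness of $\rho_{\ss}$ then forces $\hat T_t(X^\dagger X)=X^\dagger X$, so $X$ (and likewise $X^\dagger$) lies in the multiplicative domain $\N$ of the $\hat T_t$. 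One then checks that $\N$ is a unital $*$-subalgebra, invariant under $\hat T_t$, on which the $\hat T_t$ restrict to a one-parameter group of $*$-automorphisms, necessarily of the form $\Ad(e^{itK})$ for some self-adjoint $K\in\N$; and that every fixed point of $\L_\S^\dagger$ falls into $\N$ by the same Kadison--Schwarz/faithfulness argument, so $\ker\L_\S^\dagger=\N\cap\{K\}'$, the commutant of $K$ inside $\N$. Now $\hat T_t(X)=e^{i\omega t}X$ is equivalent to $[K,X]=\omega X$ with $\omega\neq0$, so $K$ has at least two distinct eigenvalues within the block of $\N\cong\bigoplus_j M_{n_j}$ supporting $X$, whence $\dim(\N\cap\{K\}')\ge2$. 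Hence $\dim\ker\L_\S^\dagger\ge2$, and since $\L_\S^\dagger$ is the Hilbert--Schmidt adjoint of $\L_\S$ also $\dim\ker\L_\S\ge2$; as $\ker\L_\S$ contains the positive-definite $\rho_{\ss}$, it then contains a second density operator $\rho_{\ss}+\epsilon H_0$ (with $H_0=H_0^\dagger$ traceless and nonzero), contradicting uniqueness. So no nonzero imaginary eigenvalue exists.

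Finally I would assemble the pieces. On $\S$, $\A_\S$ is Hurwitz, so the restricted flow converges to $\rho_{\ss}$ from every state supported on $\S$. For an arbitrary initial state $\rho(0)$ on $\H$, the map $t\mapsto\norm{\rho(t)-\rho_{\ss}}_1$ is non-increasing and hence has a limit $\ell\ge0$, so every point of the nonempty compact invariant $\omega$-limit set $\Omega$ lies at trace distance exactly $\ell$ from $\rho_{\ss}$; by attractivity of $\H_{\ss}=\S$ we have $\Omega\subseteq\DD(\S)$. Picking any $\sigma\in\Omega$, the restricted flow carries $\sigma$ to $\rho_{\ss}$, and invariance of $\Omega$ then puts $\rho_{\ss}\in\Omega$; evaluating the distance there gives $\ell=0$, i.e. $\rho(t)\to\rho_{\ss}$. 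This is the desired global attractivity. The delicate point throughout is the structural claim about the multiplicative domain $\N$ --- that $\hat T_t$ acts on it by inner automorphisms and that $\ker\L_\S^\dagger=\N\cap\{K\}'$ --- which is precisely where complete positivity, via Kadison--Schwarz, carries the argument past what linearity alone could give.
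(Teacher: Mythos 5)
Your proposal is correct in outline and proves the right statement, but the heart of it runs along a genuinely different track from the paper's own proof. Both arguments pass to the Heisenberg picture, take a peripheral eigenoperator ($E$ in the paper, your $X$) with $\L^\dagger E=i\alpha E$, and exploit the Schwarz/Kadison inequality together with full rank of the relevant state; from there, however, the paper proceeds computationally: it introduces the time-averaged state $\bar\rho(T)$, shows $\Tr(\bar\rho\,E^\dag E)=1$, uses full rank of $\bar\rho$ to force $E^\dag E=\ONE$ (so $E$ is unitary), and then obtains $\alpha=0$ from a trace identity for $E^\dag\L^\dagger(E)$ combined with Cauchy--Schwarz, treating the interior, mixed-boundary and pure-boundary steady states as separate cases. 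You instead use faithfulness of the invariant state to place $X$ in the multiplicative domain $\N$, invoke Frigerio--Evans-type structure theory --- $\N$ is an invariant finite-dimensional $*$-subalgebra on which the semigroup acts as a one-parameter group of $*$-automorphisms generated by an inner derivation $i[K,\cdot]$ --- and conclude $\dim\ker\L_\S^\dagger\ge 2$, hence a second steady state contradicting uniqueness. Your route buys uniformity: reducing first to $\supp(\rho_{\ss})$, where the steady state is faithful, and then lifting convergence back to $\H$ via the monotone trace distance and the $\omega$-limit set handles all rank cases at once (the attractivity of $\supp(\rho_{\ss})$ as a subspace that you use for the lift is also what the paper invokes, citing the literature, for its boundary case). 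The price is reliance on standard but nontrivial operator-algebra facts that your sketch asserts rather than proves (invariance of $\N$, that the restricted maps are automorphisms, that derivations of finite-dimensional $C^*$-algebras are inner), whereas the paper's unitarity-plus-trace-estimate argument is elementary and self-contained. Two points worth tightening: when extracting a second density operator from $\dim\ker\L_\S\ge 2$, note explicitly that $\L_\S$ is Hermiticity-preserving, so its kernel is spanned by Hermitian elements and a nonzero traceless Hermitian $H_0$ with $\rho_{\ss}+\epsilon H_0\ge 0$ can indeed be chosen; and the passage from an imaginary eigenvalue of the real Bloch matrix $\A_\S$ to an eigenoperator of $\L_\S^\dagger$ deserves one sentence (the spectra of $\L_\S$ and $\L_\S^\dagger$ are complex conjugates and symmetric about the real axis, so either sign of $\omega$ serves).
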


The fact that only isolated steady states can be attractive
restricts the systems that admit attractive steady states.  In
particular, if there are two (or more) orthogonal subspaces $\H_k$
of the Hilbert space $\H$, which are invariant under the dynamics,
i.e., $\supp \L(\DD(\H_k)) \subset \H_k$ for $k=1,2,\ldots$, then
the dynamics restricted to either invariant subspace must have at
least one steady state on the subspace, and the set of steady states
must contain the convex hull of the steady states on the $\H_k$
subspaces. Thus we have:

\begin{corollary}
\label{cor:no_attractive_state} A system governed by
LME~(\ref{eq:LME}) does \emph{not} have a globally asymptotically
stable equilibrium if there are two (or more) orthogonal subspaces
of the Hilbert space that are invariant under the dynamics.
\end{corollary}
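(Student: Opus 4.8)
The plan is to turn the hypothesis into the existence of \emph{two distinct} steady states and then invoke Theorem~\ref{thm:unique_ss}: since a steady state is attractive if and only if it is unique, having two steady states rules out attractivity, and a globally asymptotically stable equilibrium would in particular be attractive. So the whole argument reduces to producing, from two orthogonal invariant subspaces, two different fixed points of the dynamics.

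First I would show that each invariant subspace $\H_k$ carries a steady state supported in it. The quickest route is to note that invariance of $\H_k$ means exactly that the semigroup $e^{t\L}$ restricts to a continuous one-parameter semigroup of trace-preserving positive maps on the compact convex set $\DD(\H_k)$ of density operators on $\H_k$; Brouwer's fixed point theorem — the same tool used for Proposition~\ref{proposition:existence} in Appendix~\ref{app:ss1} — then yields a fixed point $\rho_k\in\DD(\H_k)$, i.e.\ a steady state of $\L$ with $\supp(\rho_k)\subseteq\H_k$. Equivalently, one can make the restriction explicit as in the proof of Proposition~\ref{thm:ss_bd}: choosing an orthonormal basis adapted to $\H=\H_k\oplus\H_k^\perp$ forces $H$ and the $V_d$ to be block triangular with respect to this splitting, so the compressions $H_k=\Pi_k H\Pi_k$ and $V_d^{(k)}=\Pi_k V_d\Pi_k$ generate a genuine LME on $\H_k$, to which Proposition~\ref{proposition:existence} applies directly and returns such a $\rho_k$.

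Now take two of the mutually orthogonal invariant subspaces, say $\H_1$ and $\H_2$, with steady states $\rho_1,\rho_2$ supported in $\H_1,\H_2$ respectively. Since $\H_1\perp\H_2$ we have $\Tr(\rho_1\rho_2)=0$, while $\Tr(\rho_i^2)>0$ for any density operator, so $\rho_1\neq\rho_2$. Hence $\L$ has at least two distinct steady states (indeed, by convexity of $\EEss$, the whole segment $\{a\rho_1+(1-a)\rho_2:0\le a\le1\}$ consists of steady states), so no steady state is unique. By Theorem~\ref{thm:unique_ss} no steady state is attractive, and therefore none is globally asymptotically stable, which is the assertion.

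The only delicate point is the first step — confirming that the restriction of the Lindblad flow to an invariant subspace is itself a legitimate dynamical system on $\DD(\H_k)$ so that either Brouwer or Proposition~\ref{proposition:existence} produces an honest steady state living inside $\H_k$; once that is in hand, orthogonality and Theorem~\ref{thm:unique_ss} finish the proof immediately. One should also note that no converse is claimed: uniqueness (hence attractivity) of a steady state is perfectly compatible with the existence of a single proper invariant subspace, just not of two that are mutually orthogonal.
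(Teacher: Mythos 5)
Your proposal is correct and follows essentially the same route as the paper: the text preceding Corollary~\ref{cor:no_attractive_state} likewise notes that the dynamics restricted to each orthogonal invariant subspace must have a steady state supported there (via Proposition~\ref{proposition:existence}), so the set of steady states contains the convex hull of these, and non-attractivity then follows from Theorem~\ref{thm:unique_ss}. Your extra care about the restriction being a genuine LME on each subspace (block-triangular structure as in Proposition~\ref{thm:ss_bd}) just makes explicit what the paper leaves implicit.
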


The previous results give several equivalent useful sufficient
conditions to ensure uniqueness of a steady state.

\begin{condition}
\label{cond:1} Given a system governed by a LME~(\ref{eq:LME}) with
an extremal steady state $\rho_{\ss}$, if there is no subspace
orthogonal to $\supp(\rho_{\ss})$ that is invariant under all $V_d$
then $\rho_{\ss}$ is the unique steady state.
\end{condition}

We compare Condition~\ref{cond:1} with Theorem 2
in~\cite{PhysRevA.78.042307}, which asserts that if there exists no
other subspace that is invariant under all $V_d$ orthogonal to the
set of dark states, then the only steady states are the dark states.
To prove that a given dark state is the unique stationay state,
Theorem 2 in~\cite{PhysRevA.78.042307} requires that we show (i)
uniqueness of the dark state, and (ii) that there exists no other
orthogonal invariant subspace.  Since the dark states defined in
\cite{PhysRevA.78.042307} are extremal steady states, Condition
\ref{cond:1} shows that (ii) is actually sufficient in that it
implies uniqueness and hence attractivity of the steady state.

\begin{condition}
\label{cor:unique_ss_interior} If there is no proper subspace of
$\S\subsetneq\H$ that is invariant under all Lindblad generators
$V_d$ then the system has a unique steady state in the
interior~\footnote{Proper subspace means we are excluding the
trivial cases $\S=\{0\}$ and $\S=\H$.}.
\end{condition}

Equation~(\ref{eq:ss_bd2}) also shows that if there are two
orthogonal proper subspaces $\H_1\perp \H_2$ of the Hilbert space
that are invariant under the dynamics, then
$\H=\H_1\oplus\H_2\oplus\H_3$ and there exists a basis such that
\begin{equation*}
\label{eq:ss_bd3}
  H = \begin{bmatrix}
       H_{11} & 0 & H_{13} \\
       0 & H_{22} & 0 \\
       H_{13}^\dag & 0 & H_{33}
      \end{bmatrix}, \quad
  V_d = \begin{bmatrix}
        V_{11}^{(d)} & 0            & V_{13}^{(d)} \\
                   0 & V_{22}^{(d)} & V_{23}^{(d)} \\
                   0 & 0            & V_{33}^{(d)}
        \end{bmatrix}
\end{equation*}
for all $d$, and $iH_{13}-\frac{1}{2}\sum_d (V_{11}^{(d)})^\dag
V_{13}^{(d)}=0$, i.e., in particular both subspaces are $V_d$
invariant for all $V_d$.  Hence, if there are no two orthogonal
proper subspaces of $\H$ that are simultaneously $V_d$ invariant for
all $V_d$, then the system does not admit orthogonal proper
subspaces that are invariant under the dynamics. Thus we have:

\begin{condition}
\label{cor:unique_ss} If there do not exist two orthogonal proper
subspaces of $\H$ that are simultaneously $V_d$ invariant for all
$V_d$ then the system has a unique fixed point, either at the
boundary or in the interior.
\end{condition}

The following applications show that these conditions are very
useful to show attractivity of a steady state.

\section{Applications}
\label{sec:appli}

\subsection{Two and Four-level Atoms}

Let us start with the simplest example, a two-level atom governed by
the Lindblad master equation
\begin{align*}
\dot \rho = -i\Omega[\sigma_x,\rho]+\D[\sigma]\rho
\end{align*}
with $\sigma=\ket{0}\bra{1}$.  This model describes a two-level atom
subject to spontaneous emission, or a two-level atom interacting
with a heavily damped cavity field after adiabatically eliminating
the cavity mode.  Noting that the Lindblad operator $\sigma$
corresponds to a Jordan matrix $J_0(2)$, the previous results
guarantee that this system has a unique (attractive) steady state.
More interestingly, the previous results still guarantee the
existence of a unique steady state if the atom is damped by a bath
of harmonic oscillators
\begin{align*}
 \dot\rho=[-iH,\rho]-\frac{\Gamma}{2}\bar n
 \D[\sigma^\dag]\rho-\frac{\Gamma}{2}(\bar n+1) \D[\sigma]\rho,
\end{align*}
where $\bar n=(e^{\hbar \omega/k_BT}-1)^{-1}$ is the average photon
number.  It suffices that one of the Lindblad term $\D[\sigma]\rho$
corresponds to an indecomposable Jordan matrix.  In this simple case
we can also infer the uniqueness of the steady state directly from
the Bloch representation.  We can decompose the Bloch matrix
$\A=\A_H+\A_D$ into an antisymmetric matrix $\A_H$ corresponding to
the Hamiltonian part of the evolution and a diagonal and
negative-definite matrix $\A_D$.  Since
$\vec{s}^T\A\vec{s}=\vec{s}^T\A_D\vec{s}<0$ for any $\vec{s}\neq 0$,
it follows that $\A_D$ is invertible and the Bloch equation
$\dot{\vec{s}}=\A\vec{s}+\vec{c}$ has a unique attractive stationary
state.

On the other hand, if the atom is subjected to a continuous weak
measurement such as $\dot \rho=\D[\sigma_z] \rho$ then we can easily
verify that the pure states $\ket{0}$ and $\ket{1}$ are steady
states. Hence, there are infinitely many steady states given by the
convex hull of these extremal points, $\rho_{\ss}=\alpha
\ket{0}\bra{0} +(1-\alpha)\ket{1}\bra{1}$ with $0\le\alpha \le 1$.
Of course, this is the well-known case of a depolarizing channel,
which contracts the entire Bloch ball to the $z$ axis, which is the
measurement axis.

In the previous examples uniqueness of the steady state followed
from similarity of at least one Lindblad operator $V$ to an
(indecomposable) Jordan matrix.  When $V$ is decomposable then the
last example shows that the system can have infinitely many steady
states, but similarity of a Lindblad operator to an indecomposable
Jordan matrix is only a sufficient condition, i.e., it is not
necessary for the existence of a unique steady state. If $V$ has two
or more Jordan blocks, for example, then each Jordan block defines
an invariant subspace, but provided these subspaces are not
orthogonal to each other, Condition 3 still applies, ensuring the
uniqueness of the steady state.

For instance, a system governed by a LME $\dot\rho = \D[V]\rho$ with
$V=S^{-1}JS$, $J=J_0(2)\oplus J_1(2)$ and
\begin{equation*}
S = \begin{pmatrix}
           1 & 0 & 0 & 0\\
           0 & 1 & 1 & 0\\
           0 & 0 & 0 & 1\\
          1 & 0 & 1 & 0
          \end{pmatrix}
\end{equation*}
has a unique steady state because, although $V$ has two eigenvalues
$0$ and $1$ and two proper eigenvectors, the respective eigenspaces
are \emph{not} orthogonal and there are no two orthogonal subspaces
that are invariant under $V$.  Perhaps more interestingly, for a
system with a nontrivial Hamiltonian, e.g.,
$\dot\rho=-i[H,\rho]+\D[V]\rho$, uniqueness of the steady state can
often be guaranteed even if $V$ has two (or more) orthogonal
invariant subspaces, if $H$ suitably mixes the invariant subspaces.

Consider a four-level system with energy levels as illustrated in
Fig.~\ref{fig:four_level_atom} and spontaneous emission rates
$\gamma_{34}$, $\gamma_{23}$ and $\gamma_{12}$ satisfying
$\gamma_{34},\gamma_{12}\ge \gamma_{23}$.  This is a simple model
for a laser.  To derive stimulated emission we require population
inversion, a cavity and a gain medium composed of many atoms.  For
simplicity, we only consider one atom and try to describe the
dynamics in the time scale such that the spontaneous decay
$3\rightarrow 2$ can be neglected. On this scale the Hamiltonian
optical-pumping term $H$ and the spontaneous decay term are:
\begin{align*}
    H &=\alpha(\ket{1}\bra{4}+\ket{4}\bra{1}),\\
  V_1 &=\gamma_{34}\ket{3}\bra{4},\\
  V_2 &=\gamma_{12}\ket{1}\bra{2}.
\end{align*}
There are two invariant subspaces under $V_1$ and $V_2$:
$\H_1=\Span\{\ket{1},\ket{2}\}$ and $\H_2=\Span\{\ket{3},\ket{4}\}$.
Hence, when $\alpha=0$, we have two metastable states $\ket{2}$ and
$\ket{3}$ in addition to the ground state $\ket{1}$, which is a
steady state.  However, for $\alpha\ne0$ the pumping Hamiltonian $H$
mixes up those two invariant subspaces, and through calculation we
can easily find the unique steady state:
$\rho_{\ss}=\ket{3}\bra{3}$. Thus, on the time scales considered,
population inversion between states $\ket{3}$ and $\ket{2}$ can be
realized, but eventually spontaneous emission from $\ket{3}$ to
$\ket{2}$ will kick in, resulting in the stimulated emission
characteristic of a laser. (Of course, this is only the first stage
of the whole process and it is far from the threshold of the laser.)

\begin{figure}
\center
\includegraphics[width=0.5\columnwidth]{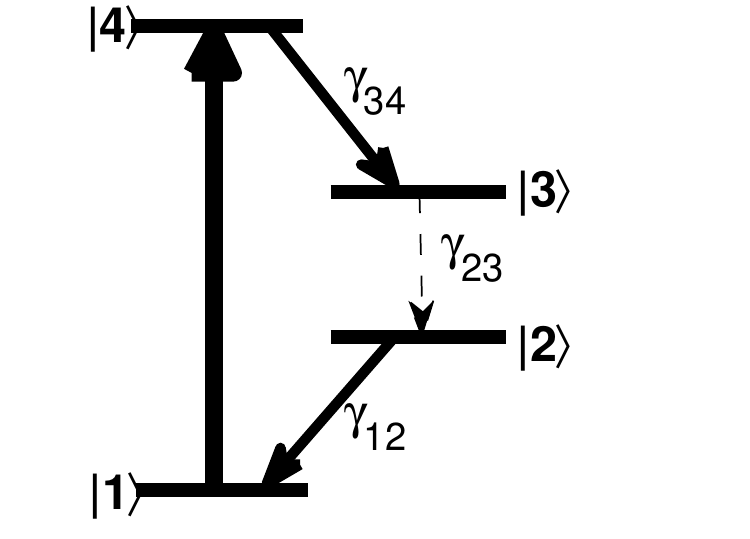}
\caption{Schematic plot of the four energy levels of one atom in a
prototype system for a laser.  The atom is pumped by an external
field. The spontaneous decay rates satisfy
$\gamma_{34},\gamma_{12}\ge \gamma_{23}$.  On the time scale when
decay from level $3$ to $2$ can be ignored
$\rho_{\ss}=\ket{3}\bra{3}$ is the unique steady state of the
system, realizing the population inversion.}
\label{fig:four_level_atom}
\end{figure}

This is just one example of optical pumping, a technique widely used
for state preparation in quantum optics.  Although the principle of
optical pumping is easy to understand intuitively for simple systems
in that population cannot accumulate in energy levels being pumped,
forcing the population to accummulate in states state not being
pumped and not decaying to other states, it can be difficult to
intuitively understand the dynamics in less straightforward cases.
For example, what would happen if we applied an additional laser
field coupling $\ket{3}$ and $\ket{4}$.  Would the system still have
a unique steady state?  If so, what is the steady state?  These
questions are not easy to answer based on intuition, but we can very
easily answer them using the mathematical formalism developed,
especially the Bloch equation.  In fact, we easily verify that the
system
\begin{equation*}
  \dot\rho(t) = -i[H,\rho] + \gamma_{34}\D[\ket{3}\bra{4}] \rho +
                             \gamma_{12}\D[\ket{1}\bra{2}] \rho
\end{equation*}
with $H=\alpha (\ket{1}\bra{4}+\ket{4}\bra{1}) +\beta
(\ket{3}\bra{4}+\ket{4}\bra{3})$ has a unique steady state
\begin{equation*}
  \rho_{\ss} = \frac{1}{\alpha^2+\beta^2}
   \begin{bmatrix}
                \beta^2 & 0 & -\alpha\beta & 0 \\
                0 & 0 & 0 & 0 \\
                -\alpha\beta & 0 & \alpha^2 & 0 \\
                0 & 0 & 0 & 0
   \end{bmatrix}
\end{equation*}
independent of $\gamma_{12}$ and $\gamma_{34}$, provided
$\gamma_{12},\gamma_{34}\neq 0$.  For $\beta=0$ this state becomes
$\ket{3}\bra{3}$, as intuition suggests.

\subsection{Quantum Harmonic Oscillator}

The harmonic oscillator plays an important role as a model for a
wide range of physical systems from photon fields in cavities, to
nano-mechanical oscillators, to bosons in the Bose-Hubbard model for
cold atoms in optical lattices.  Although strictly speaking the
harmonic oscillator is defined on an infinite-dimensional Hilbert
space, the dynamics can often be restricted to a finite-dimensional
subspace.  For many interesting quantum processes the average energy
of the system is finite and we can truncate the number of Fock
states $N_{\rm max}$ from $\infty$ to a large but finite number.  In
many quantum optics experiments, for example, the intracavity field
contains only a few photons, or has a number of photons in some
finite range if it is driven by a field with limited intensity.  In
such cases the truncated harmonic oscillator is a good model for the
underlying physical system provided $N_{\rm max}$ is large enough,
and we can apply the previous results about stationary solutions and
asymptotic stability.

Consider a harmonic oscillator with $H_0=\hbar\omega c^\dag c$ where
$c$ is the annihilation operator of the system, which on the
truncated Hilbert space with $N_{\rm max}=N$, takes the form
\begin{equation}
  c\propto \sum_{n=0}^{N-1}\sqrt{n+1}\ket{n}\bra{n+1}.
\end{equation}
If there is a Lindblad term of the form $\D[c]\rho$ then we can
infer from the previous analysis that the system has a unique and
hence asymptotically stable steady state, regardless of whatever
Hamiltonian control or interaction terms or other Lindblad terms are
present.  To see this note that the matrix representation of $c$ is
mathematically similar to the Jordan matrix
\begin{equation}
   J_0(N) =\sum_{n=0}^{N-1}\ket{n}\bra{n+1}.
\end{equation}
It is easy to verify that $J$ has a sole proper eigenvector whose
generalized eigenspace is all of $\H$ and thus does not admit two
orthogonal proper invariant subspaces.  Hence we can conclude from
Condition 3 that for \emph{any} dynamics governed by a
LME~(\ref{eq:LME}) with a dissipation term $\D[c]\rho$, there is
always a unique stationary solution to which any initial state will
converge. In general, if (\ref{eq:LME}) contains a Lindblad term
$\D[V]\rho$ with $V$ similar to a Jordan matrix
$J_\alpha(N)=\alpha\ONE_N+J_0(N)$, then (\ref{eq:LME}) always has a
unique stationary state, no matter what the other terms are.  For
example, the Lindblad equation for a damped cavity driven by a
classical coherent field $\alpha$ is
\begin{align*}
\dot\rho = -\frac{1}{2}[\alpha^* c-\alpha c^\dag,\rho]+\D[c]\rho
         = \D[\alpha\ONE_N+c]\rho,
\end{align*}
showing that the system has a unique steady state.  For $N=4$ the
steady state is
\begin{equation*}
  \rho_{\ss} = \frac{1}{C}
  \begin{pmatrix}
  1+\alpha^2A   & -\alpha A    &  \alpha^2 B & -\alpha^3 \\
   -\alpha A    &  \alpha^2 A  & -\alpha^3 B & \alpha^4 \\
   \alpha^2 B   &  -\alpha^3 B &  \alpha^4 B & -\alpha^5 \\
   -\alpha^3    &  \alpha^4    & -\alpha^5   &  \alpha^6
  \end{pmatrix}
\end{equation*}
with $\alpha$ real, $A=\alpha^2B+1$, $B=\alpha^2+1$ and
$C=4\alpha^6+3\alpha^4+2\alpha^2+1$.  When $\alpha=0$, i.e. there is
no driving field, we get $\rho_{\ss}$ is the ground (vacuum) state,
as one would expect for a damped cavity, while for a nonzero driving
field we stabilize a mixed state in the interior.

\subsection{Composite Systems}

Many physical systems are composed of subsystems, each interacting
with its environment, inducing dissipation.  For example, consider
$N$ two-level atoms in a damped cavity driven by a coherent external
field. Assuming the atom-atom and atom-cavity interactions are not
too strong, and the main sources of dissipation are independent
decay of atoms and the cavity mode, respectively, we obtain the
Lindblad terms $\D[\sigma_n]$, $n=1,\ldots,N$, and $\D[c]$ in the
LME (\ref{eq:LME}), where $\sigma_n$ is the decay operator
$\sigma=\ket{0}\bra{1}$ for the $n$th atom and $c$ is the
annihilation operator of the cavity. Simulations suggest systems of
this type always have a unique steady state, and this can be
rigorously shown using the sufficient conditions derived.

A composite quantum system whose evolution is governed by a LME
containing terms involving annihilation operators for each subsystem
has a unique steady state, regardless of the Hamiltonian and any
other Lindblad terms that may be present.  This property can be
inferred from Condition~3.  Assume the full system is composed of
$K$ subsystems with Lindblad terms $\D[\sigma_k]\rho$,
$k=1,\ldots,K$ and let $\H_I$ be an invariant subspace for all
$\sigma_k$.  Then $\H_I$ must contain the ground state $\ket{{\bf
0}}=\ket{0}^{\otimes K}$ of the composite system as
$\sigma_k\ket{0}=0$ for all $k$.  Hence, any simultaneously
$\sigma_k$-invariant subspace must contain the state $\ket{{\bf 0}}$
and there cannot exist two orthogonal proper subspaces of $\H$ that
are invariant under all $\sigma_k$.  By Condition~3, the system has
a unique steady state.

Thus, a system of $N$ atoms in a damped cavity subject to a Lindblad
master equation
\begin{equation*}
  \dot\rho(t) = -i[H,\rho(t)] + \gamma\D[c]\rho
                 + \sum_{n=1}^N \gamma_n \D[\sigma_n]\rho
\end{equation*}
has a unique steady state, regardless of the Hamiltonian $H$.  The
steady state need \emph{not} be $\ket{{\bf 0}}$, however.  In
general, this will only be the case if $\ket{{\bf 0}}$ is an
eigenstate of $H$. Similarly, the presence of the two dissipation
terms $\D[\sigma_k]$ in the LME for the two-atom model in
\cite{pra.71.042309}
\begin{equation*}
  \dot\rho(t) = -i[J+J^\dag,\rho(t)] + \gamma\D[J]\rho
  + \sum_{k=1,2}\gamma_k\D[\sigma_k]\rho
\end{equation*}
with $J=\sigma_1+\sigma_2$ ensures that there is a unique steady
state provided $\gamma_k>0$.  This is no longer the case for
$\gamma_k=0$.  In particular, in the regime where
$\gamma_1,\gamma_2\ll \gamma$ and the last two terms can be
neglected as in \cite{pra.71.042309}, the reduced dynamics no longer
has a unique steady state.

\subsection{Decomposable Systems}

\begin{figure}
\includegraphics[width=0.99\columnwidth]{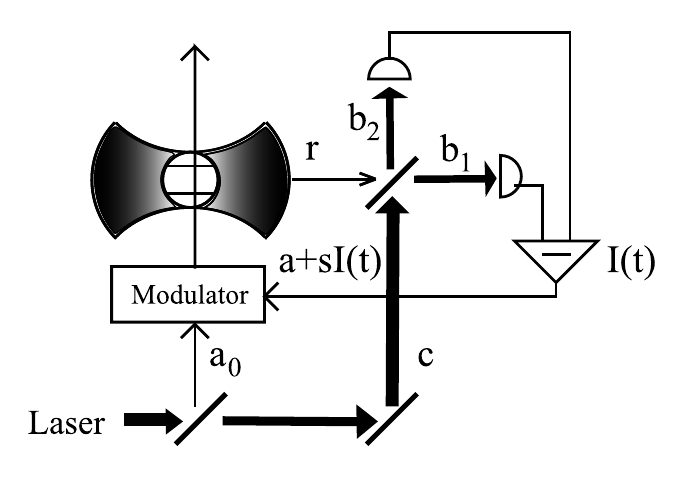}
\caption{Schematic diagram of an atomic system under direct quantum
feedback control. A laser beam is split to generate the local
oscillator $c$ for homodyne detection as well as the driving field
$a_0$ which is then modulated by the feedback photocurrent $I(t)$ to
derive the final field $a+sI(t)$.} \label{fig:homo}
\end{figure}

A system is decomposable if there exists a decomposition of the
Hilbert space $\H=\oplus_{m=1}^M \H_m$ such that
$\dot\rho=\oplus_{m=1}^M \dot\rho^{(m)}$ for any $\rho(0)=
\oplus_{m=1}^M \rho^{(m)}(0)$ where $\rho^{(m)}(0)$ is an
(unnormalized) density operator on $\H_m$. Decomposable systems
cannot have asymptotically stable (attractive) steady states by
Corollary~\ref{cor:no_attractive_state}.

One class of systems that are always decomposable and hence never
admit attractive steady states, are systems governed by a
LME~(\ref{eq:LME}) with a single Lindblad operator $V$ that is
normal, i.e., $[V,V^\dag]=0$, and commutes with the Hamiltonian.
This is easy to see. Normal operators are diagonalizable, i.e.,
there exists a unitary operator $U$ such that $UVU^\dag=D$ with $D$
diagonal, and since $[H,V]=0$, we can choose $U$ such that it also
diagonalizes $H$.  Thus the system is fully decomposable, and it is
easy to see in this case that every joint eigenstate of $H$ and $V$
is a steady state, and therefore there exists a steady-state
manifold spanned by the convex hull of the projectors onto the joint
eigenstates of $H$ and $V$.  In the absence of degenerate
eigenvalues this manifold is exactly the $N-1$ dimensional subspace
of $\DD(\H)$ consisting of operators diagonal in the joint
eigenbasis of $H$ and $V$.

A more interesting example of a physical system that is
decomposable, and thus does not admit an attractive steady state, is
a system of $n$ indistinguishable two-level atoms in a cavity
subject to collective decay, and possibly collective control of the
atoms as well as collective homodyne detection of photons emitted
from the cavity, as illustrated in Fig.~\ref{fig:homo}.  Let
$\sigma=\ket{0}\bra{1}$ be the single-qubit annihilation operator
and define the single-qubit Pauli operators
$\sx=\sigma+\sigma^\dag$, $\sy=i(\sigma-\sigma^\dag)$ and
$\sz=-2[\sx,\sy]$.  Choosing the collective measurement operator
\begin{equation*}
  M=\sum_{\ell=1}^n\sigma^{(\ell)}
\end{equation*}
$\sigma^{(\ell)}$ being the $n$-fold tensor product whose $\ell$th
factor is $\sigma$, all others being the identity $\ONE_2$, and the
collective local control and feedback Hamiltonians
\begin{equation*}
  H_c=u_x J_x + u_y J_y + u_z J_z, \quad F=\lambda H_c,
\end{equation*}
where $J_a=\sum_{\ell=1}^n \sigma_a^{(\ell)}$ for $a\in\{x,y,z\}$,
the evolution of the system is governed by the feedback-modified
Lindblad master equation~\cite{pra49p2133}
\begin{equation*}
  \dot{\rho}(t) = -i[H_0+H_c+M^\dag F+F M,\rho] + \D[M-iF]\rho,
\end{equation*}
assuming local decay of the atoms is negligible.  It is easy to see
from the master equation above that the system decomposes into
eigenspaces of the (angular momentum) operator
\begin{equation*}
  J=J_x^2+J_y^2+J_z^2,
\end{equation*}
i.e., both the measurement operator $M$ and the control and feedback
Hamiltonians $H_c$ and $F$ (and hence $M^\dag F+F M$) can be written
in block-diagonal form with blocks determined by the eigenspaces of
$J$. Therefore, the system is decomposable and we cannot stabilize
\emph{any} state, no matter how we choose
$\vec{u}=(u_x,u_y,u_z,\lambda)$.  For $n=2$ this system was studied
in~\cite{pra71n042309} in the context of maximizing entanglement of
a steady state on the $J=1$ subspace using feedback, although the
question of stability of the steady states was not considered.
Although the system does not admit an attractive steady state in the
whole space, we can verify that $\EEss$ contains a line segment of
steady states that intersects both the $J=0$ and $J=1$ subspaces in
a unique state.  Thus $J=1$ subspace has a unique steady state
determined by $\vec{u}$, to which all solutions with initial states
in this subspace converge.

\subsection{Feedback Stabilization}

An interesting possible application of the criteria for the
existence of unique, attractive steady states is the possibility of
engineering the dynamics such that the system has a desired
attractive steady state by means of coherent control, measurements
and feedback.  An special case of interest here is direct feedback.
Systems subject to direct feedback as in the previous example, can
be described by a simple \emph{feedback-modified} master
equation~\cite{pra49p2133}:
\begin{equation}
 \label{eq:FME}
 \dot\rho(t) = -i[H,\rho(t)] + \D[M-iF]\rho(t),
\end{equation}
where $H=H_0+H_c+\frac{1}{2}(M^\dag F+FM)$ is composed of a fixed
internal Hamiltonian $H_0$, a control Hamiltonian $H_c$ and a
feedback correction term $\frac{1}{2}(M^\dag F+FM)$.  This master
equation is of Lindblad form, and hence all of the previous results
are directly applicable.  Setting
\begin{subequations}
\begin{align}
 V   &= M-iF,\\
 M   &= V+V^\dag, \\
 F   &= i(V-V^\dag), \\
 H_c &= \textstyle H-H_0-\frac{1}{2}(M^\dag F+FM)
\end{align}
\end{subequations}
we see immediately that if the control and feedback Hamiltonian,
$H_c$ and $F$, and the measurement operator $M$ are allowed to be
arbitrary Hermitian operators, then we can generate \emph{any}
Lindblad dynamics. This is also true for a non-Hermitian measurement
operator $M$ as arises, e.g., for homodyne detection, since the
anti-Hermitian part of $M$ can always be canceled by the effect of
the feedback Hamiltonian $F$ in $\D[M-iF]$.  Given this level of
control, it is not difficult to show that we can in principle render
\emph{any} given target state $\rho_{\ss}$, pure or mixed, globally
asymptotically stable by choosing appropriate $H$ and $V$ or,
equivalently, by choosing appropriate $H_c$, $F$ and $M$.

To see how to do accomplish this in principle, let us first consider
the generic case of a target state $\rho_{\ss}$ is in the interior
of the convex set of the states with $\rank(\rho_{\ss})=N$. A
necessary and sufficient condition for $\rho_{\ss}$ to be an
attractive steady state is
\begin{itemize}
\item[(i)]  $-i[H,\rho_{\ss}]+\D[V]\rho_{\ss}=0$ and
\item[(ii)] no (proper) subspace of $\H$ is invariant under (\ref{eq:LME}).
\end{itemize}
The first condition ensures that $\rho_{\ss}$ is a steady state, and
the latter ensures that it is the only steady state in the interior
by Corollary~\ref{cor:unique_ss_interior}.  It is easy to see that
choosing $V$ and $H$ such that
\begin{equation}
  \label{eq:ss_in}
  V = U\rho_{\ss}^{-1/2}, \quad [H,\rho_{\ss}] =0
\end{equation}
where $U$ is unitary, ensures that (i) is satisfied as
\begin{align*}
  \D[V]\rho_{\ss}&= U\rho_{\ss}^{-1/2}\rho_{\ss}\rho_{\ss}^{-1/2}U^\dag
                  -\frac{1}{2}\{\rho_{\ss}^{-1},\ket{n}\bra{n}\}\\
                 &= UU^\dag - \frac{1}{2} \{\rho_{\ss}^{-1},\rho_{\ss}\}
                  = \ONE - \ONE =0.
\end{align*}
To satisfy (ii) we must choose $U$ such that $V$ has no orthogonal
invariant subspaces, or equivalently $H$ mixes up any two orthogonal
invariant subspaces $V$ may have.  If $\rho_{\ss}=\sum_k w_k\Pi_k$,
where $\Pi_k$ is the projector onto the $k$th eigenspace then the
invariance condition implies that $U$ must not commute with any of
the projection operators $\Pi_k$, or any partial sum of $\Pi_k$ such
as $\Pi_1+\Pi_2$.  To see this, suppose $U$ commutes with
$\Pi_n=\ket{n}\bra{n}$, a projector onto an eigenspace of
$\rho_{\ss}$. Then $\ket{n}$ is a simultaneous eigenstate of $U$ and
$\rho_{\ss}$ with $U\ket{n}=e^{i\phi}\ket{n}$ and
$\rho\ket{n}=\alpha\ket{n}$, where $\alpha$ must be real and
positive as $\rho_{\ss}$ is a positive operator, and we have
$[H,\ket{n}\bra{n}]= 0$, $\{\rho_{\ss}^{-1},\ket{n}\bra{n}\}=
2\alpha^{-1} \ket{n}\bra{n}$, $\ket{n}$ is an eigenstate of $V$
\begin{equation*}
    V\ket{n} = U\rho^{-1/2}\ket{n}
             = U \alpha^{-1/2} \ket{n}
             = \alpha^{-1/2} e^{i\phi}\ket{n},
\end{equation*}
and thus $V\ket{n}\bra{n}V^\dag=\alpha^{-1}\ket{n}\bra{n}$ and
$\D[V]\ket{n}\bra{n}=0$, i.e., $\ket{n}\bra{n}$ is a steady state of
the system at the boundary.  Hence, the steady state is not unique,
and $\rho_{\ss}$ cannot be attractive.  In practice almost any
randomly chosen unitary matrix $U$ such as $U=\exp(i
(X+X^\dagger))$, where $X$ is a random matrix, will satisfy the
above condition, and given a candidate $U$ it is easy to check if it
is suitable by calculating the eigenvalues of the superoperator $\A$
in~(\ref{eq:bloch}).  Of course, choosing $H$ and $V$ of the
form~(\ref{eq:ss_in}) is just one of many possible choices for
condition (i) to hold.  It is possible to find other suitable sets
of operators $(H,V)$ in terms of $(H_c,F,M)$ when the class of
practically realizable control and feedback operators or
measurements is restricted.  For example, we can easily verify that
$\rho_{\ss}$ is the unique attractive steady state of a two-level
system governed by the LME~(\ref{eq:LME}) with
\begin{align*}
  H = \begin{bmatrix} 0 & 1\\ 1 & 0 \end{bmatrix},
  V = \begin{bmatrix} 1 & (\sqrt{3}-1)i\\ (3-\sqrt{3})i & 1\end{bmatrix},
  \rho_{\ss}=\begin{bmatrix} \frac{1}{4} & 0\\0 & \frac{3}{4}\end{bmatrix},
\end{align*}
even though $H$ and $V$ do not satisfy~(\ref{eq:ss_in}).  Thus,
there are generally many possible choices for the control,
measurement, and feedback operators that render a particular state
in the interior asymptotically stable.

If the target state $\rho_{\ss}$ is in the boundary of the convex
set of the states, i.e., $\rank(\rho_{\ss})<N$, then the proof of
Proposition~\ref{thm:ss_bd} shows that we must have
\begin{subequations}
\label{eq:ss_bd4}
\begin{align}
  0 &= -i[H_{11},R_{11}] + \D[V_{11}] R_{11},   \\
  0 &= -\frac{1}{2} V_{11}^\dag V_{12}+i H_{12},\\
  0 &= V_{21}.
\end{align}
\end{subequations}
with $V_{k\ell}$ and $H_{k\ell}$ defined as in
Eq.~(\ref{eq:HVpart}), to ensure that $\rho$ is a steady state.  To
ensure uniqueness we must further ensure that there are no other
steady states.  This means, by Corollary~\ref{cor:unique_ss}, that
(a) we must choose $H_{11}$ and $V_{11}$ such that $R_{11}$ is the
\emph{unique} solution of (\ref{eq:ss_bd4}a), and thus no subspace
of $\S=\supp(\rho_{\ss})$ is invariant, and (b) we must choose the
remaining operators $H_{12}$ and $V_{12}$ and $V_{22}$ such that
(\ref{eq:ss_bd4}b) is satisfied and no subspace of $\S^\perp$ is
invariant, because if such a subspace $\S_2$ exists, then $\S_1$ and
$\S_2$ will be two proper orthogonal invariant subspaces and
$\rho_{\ss}$ will not be attractive.

One way to construct such a solution is by choosing $H_{11}$ such
that $[H_{11},R_{11}]=0$ and setting $V_{11}=U_{11}R_{11}^{-1/2}$,
where $U_{11}$ is a suitable unitary operator defined on $\S$ as
discussed in the previous section.  Then we choose $V_{22}$ such
that no proper subspace of $\S^{\perp}$ is invariant.  Finally, we
must choose $V_{12}$ and $H_{12}$ such that (\ref{eq:ss_bd4}b) is
satisfied and $\S^\perp$ is itself not invariant.  Although these
constraints appear quite strict, in practice there are usually many
solutions.

For example, suppose we want to stabilize the rank-3 mixed state
$\rho_{\ss}=\frac{1}{8}\diag(1,3,4,0)$ at the boundary.  Then we
partition $\rho$, $V$ and $H$ as above, setting $V_{11}=U R_{11}$
with $R_{11}=\frac{1}{8}\diag(1,3,4)$ and $U$ a suitable unitary
matrix such as
\begin{equation*}
  U = \begin{bmatrix} 0 & 1 & 0 \\ 0 & 0 & 1 \\ 1 & 0 & 0 \end{bmatrix}.
\end{equation*}
Then we choose $H_{11}$ such that $[H_{11},R_{11}]=0$, e.g., we
could set $H_{11}=V_{11}^\dag V_{11}$, a choice, which ensures that
$\rho_{\ss}$ is the unique steady state on the subspace
$\S=\supp(\rho_{\ss})$.  Next we choose $V_{12}$ such that
$\S^\perp$ is not an invariant subspace.  Any choice other than
$V_{12}=(0,0,0)$ will do in this case, e.g., set $V_{12}=(1,0,0)$.
Finally, we set $H_{12}=-\frac{i}{2}V_{11}^\dag V_{12}$,
$V_{21}=(0,0,0)^T$ and $V_{22}\neq 0$ to ensure that $\rho_{\ss}$ is
the unique globally asymptotically stable state.

Note that the Hamiltonian, which was \emph{not} crucial for
stabilizing a state in the interior and could have been set to
$H=0$, does affect our ability to stabilize states in the boundary.
We can stabilize a mixed state in the boundary only if $H_{12}\ne0$.
If $H_{12}=0$ then Eq.~(\ref{eq:ss_bd4}b) implies $V_{11}^\dag
V_{12}=0$, and there are two possbilities.  If $V_{12}\ne0$ but
$V_{11}$ has a zero eigenvalue, then the system restricted to the
subspace $\S$ has a pure state at the boundary and thus $R_{11}$
cannot be the unique attractive steady state on $\S$. Alternatively,
if $V_{12}=0$ then $V$ is decomposable with two orthogonal invariant
subspaces $\S$ and $\S^\perp$, and $\rho_{\ss}$ cannot be attractive
either, consistent with what was observed in~\cite{JPA41n065201}.

Target states at the boundary include pure states.  $\S$ in this
case is a one-dimensional subspace of $\H$, and
Eq.~(\ref{eq:ss_bd4}a) is trivially satisfied as $H_{11}$ and
$V_{11}$ have rank 1, and the crucial task is to find a solution to
Eq.~(\ref{eq:ss_bd4}b) such that no subspace of $\S^\perp$ is
invariant.  If $H_{12}=0$ then this is possible only if $V_{11}=0$
and thus if $V$ has a zero eigenvalue, as was observed
in~\cite{JPA41n065201}, but again, if $H\neq 0$ then there are many
choices for $H$ and $V_{12}$, $V_{22}$ that stabilize a desired pure
state.  For example, we can easily check that the pure state
$\rho_{\ss}=\ket{1}\bra{1}$ is a steady state of the system
$\dot\rho=-i[H,\rho]+\D[V]\rho$ if $V$ is the irreducible Jordan
matrix $J_a(N)$ with eigenvalue $a$ and
$H_{12}=-\frac{i}{2}a^*\ket{1}\bra{2}$.

\section{Invariant Set of Dynamics, Decoherence-free subspaces}
\label{sec:invariant}

Having characterized the set of steady states, the question is
whether the system always converges to one of these equilibria.  The
previous sections show that this is the case if the system has a
unique steady state, as uniqueness implies asymptotic stability.  In
general, however, this is clearly not the case for a linear
dynamical system.  Rather, all solutions converge to a center
manifold $\EEinv$, which is an invariant set of the dynamics,
consisting of both steady states and limit
cycles~\cite{94Glendinning}.  Although we have seen that the
Lindblad master equation~(\ref{eq:LME}) does not admit
\emph{isolated} centers, limit cycles often do exist for systems
governed by a LME.  This is easily seen when we consider the special
case of Hamiltonian systems. In this case any eigenstate of the
Hamiltonian is a steady state but no other dynamical flows converge
to these steady states.  For the Bloch equation~(\ref{eq:bloch})
$\EEinv$ can be characterized explicitly. Consider the Jordan
decomposition of the Bloch superoperator, $\A=\sop{SJS}^{-1}$, where
$\sop{J}$ is the canonical Jordan form.  Let
$\gamma_\ell=\alpha_\ell+ i\beta_\ell$ be the eigenvalues of $\A$
and $\Pi_{\gamma}$ be the projector onto the (generalized)
eigenspace of the eigenvalue $\gamma$, and let $\I$ be the set of
indices of the eigenvalues of $\A$ with $\alpha_\ell=0$.

\begin{definition}
Let $\EEinv^{\lin}$ be the affine subspace of $\RR^{N^2-1}$
consisting of vectors of the form $\{\vec{s}_0+\vec{w}\}$, where
$\vec{s}_0$ is a solution of $\A\vec{s}_0+\vec{c}=\vec{0}$, and
$\vec{w}\in\EEcc$, where
$\EEcc=\sum_{\ell\in\I}\Pi_{\gamma_\ell}(\RR^{N^2-1})$ is the direct
sum of the eigenspaces of $\A$ corresponding to eigenvalues with
zero real part.  Then the invariant set $\EEinv=\EEinv^{\lin} \cap
\DD_{\RR}(\H)$.
\end{definition}

It is important to distinguish the invariant set $\EEinv$, which is
a set of Bloch vectors (or density operators), from the notion of an
invariant subspace of the Hilbert space $\H$.  In particular, as
$\EEinv$ contains the set of steady states $\EEss$, it is always
nonempty. Although $\supp(\EEinv)$, i.e., the union of the supports
of all states in $\EEinv$, is clearly an invariant subspace of $\H$,
in most cases $\supp(\EEinv)$ will be the entire Hilbert space.  In
particular, this is the case if $\EEinv$ contains a single state in
the interior, and $\supp(\EEinv)$ will be a proper subspace of the
Hilbert space only if all steady states are contained in a face at
the boundary.  This shows that proper invariant subspaces of the
Hilbert space exist only for systems that have steady states at the
boundary, and the maximal invariant subspace of the Hilbert space
can only be less than the entire Hilbert space if there are no
steady states in the interior.

\begin{theorem}
\label{thm:conv} Every trajectory $\vec{s}(t)$ of a system governed
by a Lindblad equation asymptotically converges to $\EEinv$.
\end{theorem}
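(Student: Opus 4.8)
The plan is to exploit the Jordan decomposition of the Bloch superoperator $\A$ and the affine structure of the Bloch equation~(\ref{eq:bloch}). First I would write the general solution of $\dot{\vec{s}}=\A\vec{s}+\vec{c}$ explicitly. Picking any particular steady state $\vec{s}_0$ (which exists by Proposition~\ref{proposition:existence}), the substitution $\vec{y}(t)=\vec{s}(t)-\vec{s}_0$ reduces the dynamics to the linear homogeneous equation $\dot{\vec{y}}=\A\vec{y}$, so that $\vec{y}(t)=e^{\A t}\vec{y}(0)$. Decomposing $\RR^{N^2-1}=\bigoplus_\ell \Pi_{\gamma_\ell}(\RR^{N^2-1})$ into the generalized eigenspaces of $\A$, the components of $\vec{y}(t)$ in eigenspaces with $\alpha_\ell=\Re(\gamma_\ell)<0$ decay like polynomial-times-$e^{\alpha_\ell t}\to 0$, while the components with $\alpha_\ell=0$ remain in $\EEcc$ for all time. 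Thus the distance from $\vec{s}(t)$ to the affine set $\EEinv^{\lin}=\{\vec{s}_0+\vec{w}:\vec{w}\in\EEcc\}$ tends to zero as $t\to\infty$.

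The one ingredient that does not follow from elementary linear algebra is the claim that $\A$ has \emph{no eigenvalue with $\alpha_\ell>0$}; without this the "decaying" components would actually blow up and the argument collapses. Here I would invoke the physical boundedness established in Sec.~\ref{sec:open-system}: every physical state satisfies $\norm{\vec{s}}\le R=\sqrt{(N-1)/N}$, and since the Lindblad flow maps density operators to density operators, the trajectory $\vec{s}(t)$ stays in this ball for all $t\ge 0$. A linear (affine) flow that remains bounded for all forward times cannot have any eigenvalue of $\A$ with positive real part (such a mode, if excited, forces $\norm{\vec{s}(t)}\to\infty$), nor can it have a nontrivial Jordan block attached to a purely imaginary eigenvalue (the secular polynomial growth would again be unbounded). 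Hence every Jordan block of $\A$ with $\alpha_\ell=0$ is in fact one-dimensional, and on $\EEcc$ the flow $e^{\A t}$ acts by pure (bounded) oscillation $\bigl(\text{blocks }\bigl[\begin{smallmatrix}0 & \beta_\ell\\ -\beta_\ell & 0\end{smallmatrix}\bigr]\bigr)$, which guarantees $\EEinv^{\lin}$ is genuinely invariant and that $\vec{y}(t)$ has a bounded projection onto it.

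It then remains to intersect with the physical region. Write $\vec{y}(0)=\vec{y}_s(0)+\vec{y}_c(0)$ with $\vec{y}_s(0)$ in the span of the strictly-stable eigenspaces and $\vec{y}_c(0)\in\EEcc$. Then $\vec{s}(t)=\vec{s}_0+e^{\A t}\vec{y}_c(0)+e^{\A t}\vec{y}_s(0)$, where the last term $\to\vec 0$. The middle term traces out a bounded orbit inside $\EEinv^{\lin}$; since $\vec{s}(t)$ is physical for all $t$ and its distance to that orbit goes to zero, any limit point of $\vec{s}(t)$ lies in $\EEinv^{\lin}\cap\DD_\RR(\H)=\EEinv$. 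More precisely, $\operatorname{dist}(\vec{s}(t),\EEinv)\le \operatorname{dist}(\vec{s}_0+e^{\A t}\vec{y}_c(0),\EEinv)+\norm{e^{\A t}\vec{y}_s(0)}$; the second summand vanishes in the limit, and the first is zero for all $t$ because $\vec{s}_0+e^{\A t}\vec{y}_c(0)$ is the image of the physical point $\vec{s}(t)$ under a projection that commutes with the flow — or, more carefully, because closedness and compactness arguments on the $\omega$-limit set $\omega(\vec{s}(0))$ show $\omega(\vec{s}(0))\subset\EEinv$. I expect the main obstacle to be this last bookkeeping step: one must argue cleanly that the $\EEcc$-component of a physical trajectory stays physical in the limit (equivalently, that $\EEinv^{\lin}\cap\DD_\RR(\H)$ actually captures the asymptotics rather than some larger set), which is most transparently handled by passing to the $\omega$-limit set, noting it is nonempty, compact, invariant, and — by the decay of the stable modes — contained in $\EEinv^{\lin}$, hence in $\EEinv$.
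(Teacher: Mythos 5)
Your proposal is correct and follows essentially the same route as the paper: shift by a steady state (whose existence Proposition~\ref{proposition:existence} provides), pass to the Jordan form of $\A$, use boundedness of physical trajectories to exclude eigenvalues with positive real part (and nontrivial Jordan blocks on the imaginary axis), and conclude that the strictly stable components decay so the trajectory converges to $\vec{s}_0+\EEcc$, hence to $\EEinv$. Your extra bookkeeping via the $\omega$-limit set to justify intersecting with $\DD_{\RR}(\H)$ is a point the paper's proof passes over silently, and is a reasonable refinement rather than a different method.
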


\begin{proof}
Let $\vec{s}_0$ be a solution of the affine-linear equation
$\A\vec{s}_0 +\vec{c}=\vec{0}$, which exists by
Prop.~\ref{proposition:existence}.
$\vec{\Delta}(t)=\vec{s}(t)-\vec{s}_0$ satisfies the homogeneous
linear equation
$\dot{\vec{\Delta}}(t)=\A\vec{\Delta}(t)=\sop{SJS}^{-1}
\vec{\Delta}(t)$, where $\sop{J}=\diag(J_{\ell})$ is the Jordan
normal form of $\A$ consisting of irreducible Jordan blocks
$J_{\ell}$ of dimension $k_\ell$ with eigenvalue $\gamma_\ell$.
Setting $\vec{x}(t) =\sop{S}^{-1}\vec{\Delta}(t)$ gives
$\dot{\vec{x}}(t)=\sop{J}\vec{x}(t)$ and
$\vec{x}(t)=e^{t\sop{J}}\vec{x}(0)$, where $e^{t\sop{J}}$ is
block-diagonal with blocks
\begin{equation}
  \label{eq:Eell}
  E_{\ell}(t) = e^{t\alpha_\ell}
  \begin{bmatrix} R_\ell & t R_\ell & \frac{1}{2} t^2 R_\ell
                    & \frac{1}{6} t^3 R_\ell &\ldots \\
                  0 & R_\ell & tR_\ell & \frac{1}{2} t^2R_\ell &\ldots\\
                  0 & 0 & R_\ell & tR_\ell & \ldots \\
                  \vdots & \vdots & \vdots & \ddots
  \end{bmatrix},
\end{equation}
where $R_\ell=1$ if $\beta_\ell=0$, otherwise
\begin{equation}
  \label{eq:Rell}
  R_\ell = \begin{bmatrix}
            \cos(t\beta_\ell) & -\sin(t\beta_\ell) \\
            \sin(t\beta_\ell) & \cos(t\beta_\ell)
           \end{bmatrix}.
\end{equation}
Since the dynamical evolution is restricted to a bounded set, the
matrix $\A$ cannot have eigenvalues with positive real parts, i.e.,
$\alpha_\ell\le 0$, and taking the limit for $t\to\infty$ shows that
the Jordan blocks with $\alpha_\ell<0$ are annihilated, and thus
$\vec{\Delta}(t)\to \sop{S}\vec{x}_\infty=\vec{w}\in\EEcc$ and
$\vec{s}(t)=\vec{s}_0+\sop{S}\vec{x}(t)\to \vec{s}_0+\vec{w}$.
\end{proof}

The dimension of the invariant set, or more precisely, the affine
hyperplane of $\RR^{N^2-1}$ it belongs to, is equal to the sum of
the geometric multiplicities of the eigenvalues $\gamma_\ell$ with
zero real part, while the dimension of the set of steady states is
equal to the number of zero eigenvalues of $\A$.  Thus, in general,
the invariant set is much larger than the set of steady states of
the system.  Convergence to a steady state is guaranteed only if
$\A$ has no purely imaginary eigenvalues.  In particular, if all
eigenvalues of $\A$ have negative real parts, i.e., $\I=\emptyset$,
then all trajectories $\vec{s}(t)$ converge to the unique steady
state $\vec{s}_{\ss}= -\A^{-1}\vec{c}$. If $\A$ has purely imaginary
eigenvalues, then the steady states are centers and the invariant
set contains center manifolds, which exponentially attract the
dynamics~\cite{94Glendinning}.  In either case the trajectories of
the system are
\begin{equation}
  \vec{s}(t)=\vec{s}_0+\sop{S}e^{t\sop{J}}\sop{S}^{-1}(\vec{s}(0)-\vec{s}_0),
\end{equation}
and the distance of $\vec{s}(t)$ from the invariant subspace
\begin{equation}
  d(\vec{s}(t),\EEinv)=\norm{\sop{S}\vec{x}^\perp(t)},
\end{equation}
where $\vec{x}^\perp(t)=\sum_{\ell\not\in\I} E_{\ell}(t)
\Pi_{\gamma_\ell}(\vec{x}(0))$.  Equation.~(\ref{eq:Eell}) also
shows that any eigenvalue with zero real part cannot have a
nontrivial Jordan block as the dynamics would become unbounded
otherwise.  Thus the geometric and algebraic multiplicities of
eigenvalues with zero real part must agree. Moreover, the
eigenvalues of the (real) matrix $\A$ occur in complex conjugate
pairs $\gamma=\alpha\pm i\beta$. Thus, if $\A$ has a pair of
eigenvalues $\pm i\alpha$ with multiplicity $k$, then the center
manifold (as a subset of $\RR^{N^2-1}$) is at least $2k$
dimensional. Finally, as a unique steady state cannot be a center,
it follows that if $\A$ has purely imaginary eigenvalues, then it
must also have at least one zero eigenvalue, and there will be a
manifold of steady states, all of which are centers.  The properties
of the invariant set are nicely illustrated by the following
example.

Consider a four-level system with
$\dot\rho(t)=-i[H,\rho]+\D[V]\rho$, where
\begin{equation*}
 H = \frac{\sqrt{5}}{15}
     \begin{bmatrix}
      6&2&1&-2\\
      2&-6&2&1\\
      1&2&2&6\\
      -2&1&6&-2
      \end{bmatrix},  \quad
 V = \begin{bmatrix}
      1&-2&-1&1\\
      1&-1&-1&0\\
      0&-1&0&1\\
      1&-1&-1&0
      \end{bmatrix}.
\end{equation*}
$V$ is indecomposable and has two proper and two generalized
eigenvectors with eigenvalue $0$.  Let $\H_0$ be the subspace of
$\H$ spanned by the proper eigenvectors.  $H$ is blockdiagonal with
respect to a suitable orthonormal basis of $\H_0\oplus\H_0^\perp$,
and there is a 1D manifold of steady states
\begin{equation*}
 \rho_c(a) = \frac{1}{10}
 \begin{bmatrix}
  3-20\,a &-5\,a+1&2-15\,a&-5\,a+1\\
  -5\,a+1&10\,a+2&-15\,a-1&10\,a+2\\
  2-15\,a&-15\,a-1&3&-15\,a-1\\
 -5\,a+1&10\,a+2&-15\,a-1&10\,a+2
 \end{bmatrix}
\end{equation*}
where $a \in \frac{\sqrt{5}}{15}[-1,1]$.  We can verify that the
Bloch matrix $\A$ has a pair of purely imaginary eigenvalues $\pm
2i$ in addition to a $0$ eigenvalue and that the invariant set
$\EEinv$ consists of all density matrices with support on the
subspace $\H_0$ spanned by the proper eigenvectors of $V$ defined
above.  In terms of the corresponding Bloch vectors the invariant
set corresponds to the intersection of a three-dimensional invariant
subspace of $\RR^{15}$ with $\DD_\RR(\H)$.  This subspace is what we
refer to as the ``face'' at the boundary, although note that this
face is in fact homeomorphic to the 3D Bloch ball in this case.
Fig.~\ref{fig:conv}(a) shows that all trajectories converge to
$\EEinv$, but (b) shows that the trajectories do not converge to
steady states (except for a set of measure zero). Rather, states
starting outside the invariant set converge to paths in $\EEinv$,
which in this example are circular closed loops.  It is also
important to note that most initial states, even initially pure
states, converge to mixed states (with lower purity) with support on
the invariant set [see Fig.~\ref{fig:conv}(c)].

\begin{figure}
\begin{center}
\vspace{-0.1in}
\includegraphics[width=0.95\columnwidth]{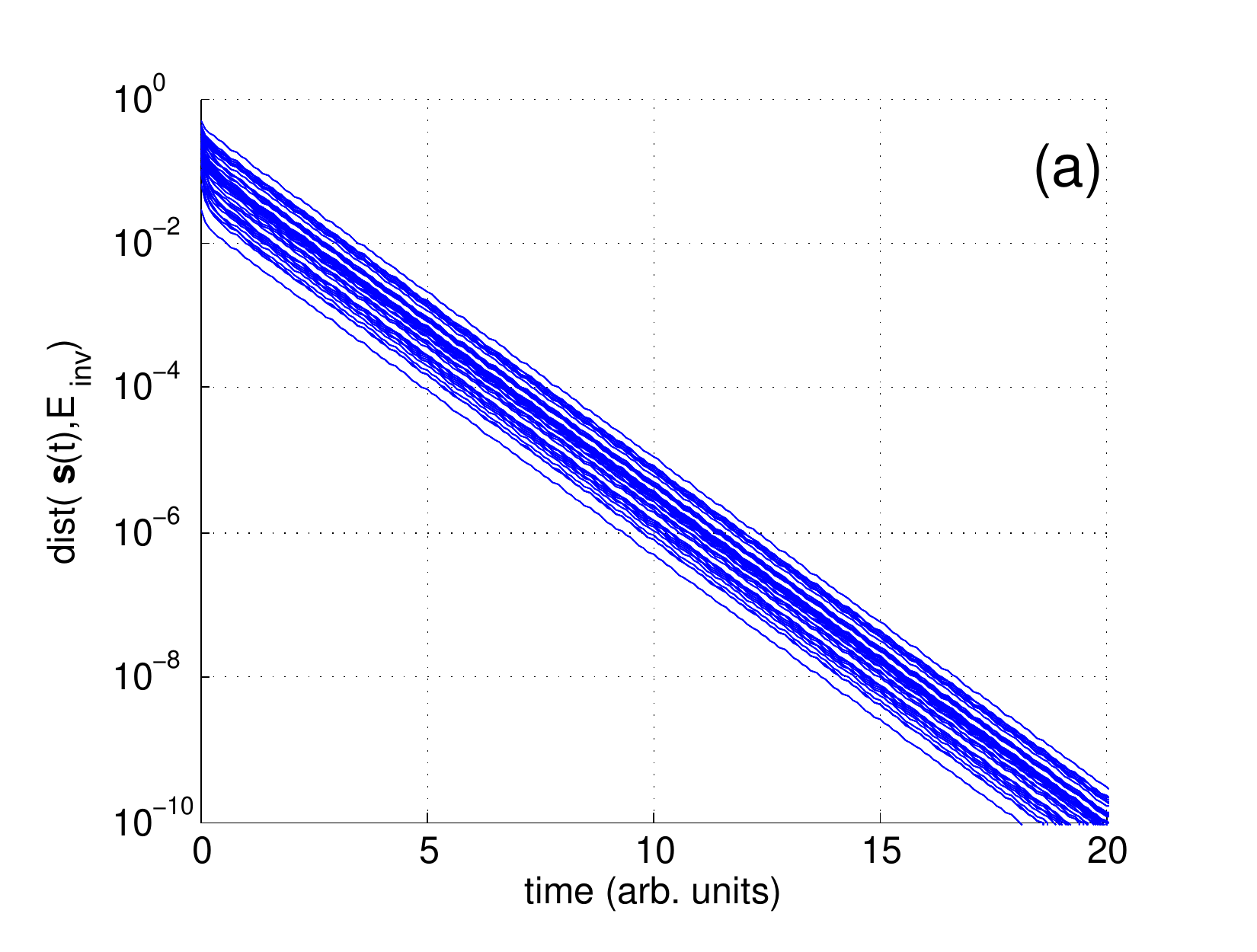}\\
\vspace{-0.3in}
\includegraphics[width=0.95\columnwidth]{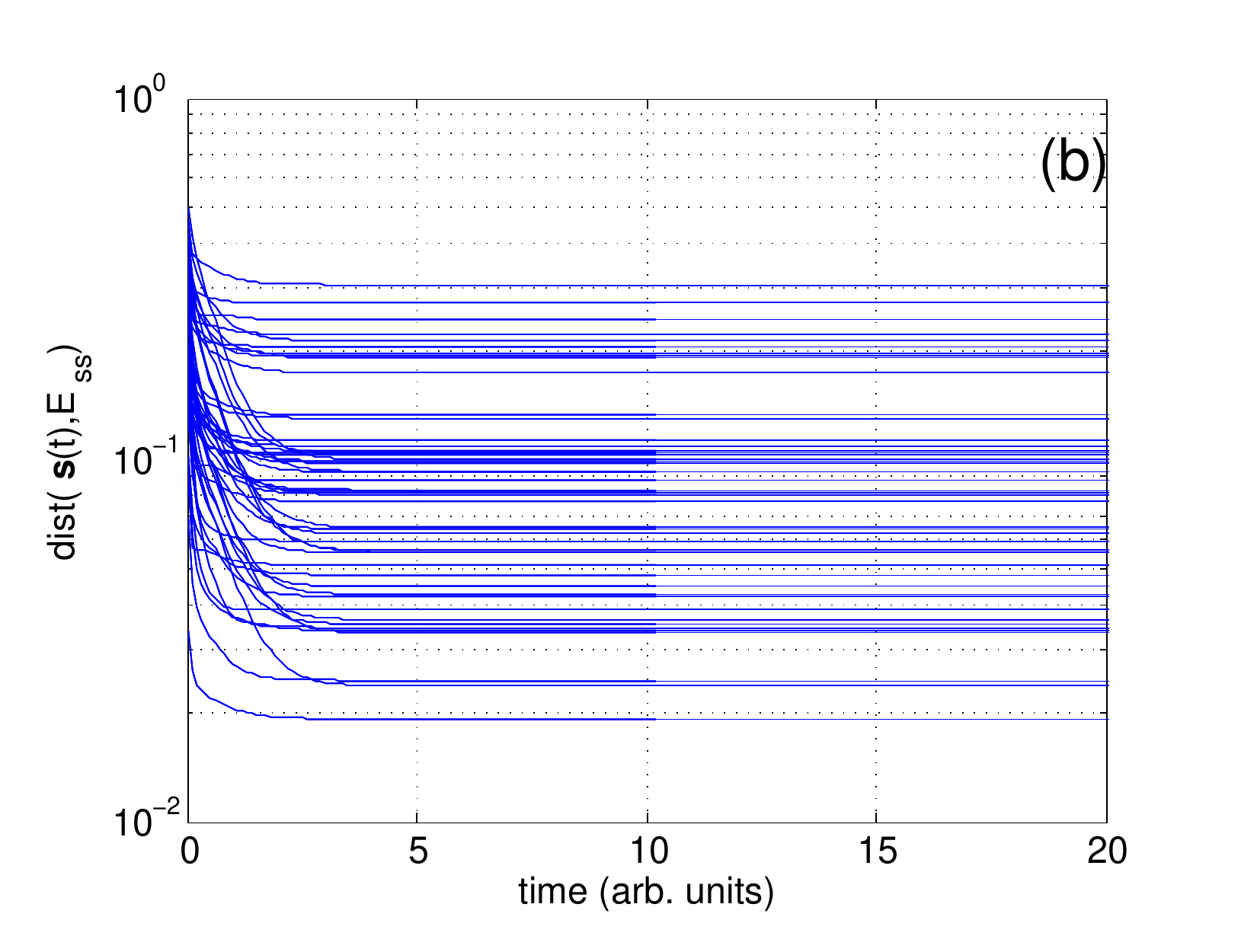}\\
\vspace{-0.3in}
\includegraphics[width=0.95\columnwidth]{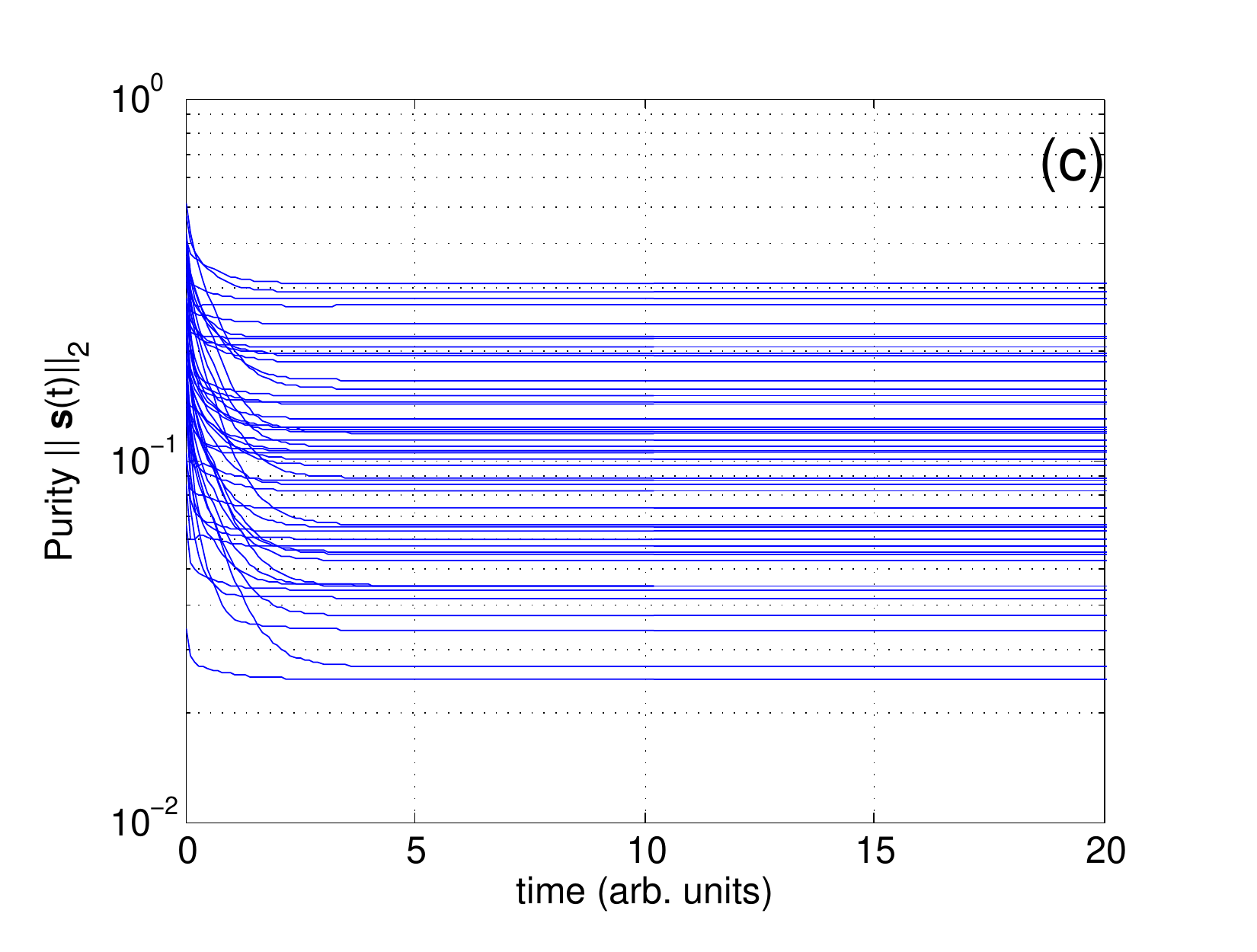}\\
\vspace{-0.1in}
\end{center}
\caption{(Color online) Semilogarithmic plots of (a) the distance of
$\vec{s}(t)$ from the invariant set $\EEinv$, (b) the distance from
the set of steady states, and (c) the purity $\norm{\vec{s}(t)}$ as
a function of time for 50 trajectories starting with 50 random
initial states $\vec{s}_0$.  (a) All of the trajectories converge to
the invariant set at a constant rate, indicating exponential decay
to the invariant set, but the distances of the trajectories from the
smaller set of steady states $\EEss \subset\EEinv$ in (b) do not
decrease to zero; rather they converge to different limiting values,
consistent with convergence of each trajectory to a different limit
cycle inside the invariant set.  As expected considering that the
set of steady states $\EEss$ is a measure-zero subset of $\EEinv$,
the limiting values of the distances are strictly positive, i.e.,
none of the 50 trajectories converges to a steady state.  (c) The
trajectories converge to various mixed states.  All limiting values
are far below $\frac{1}{2}\sqrt{3}$, the limiting value for a pure
state, i.e., none of the 50 trajectories converges to a pure state,
again as expected, as the set of pure states in $\EEinv$ is a
measure zero subset.}  \label{fig:conv}
\end{figure}

Although this example may seem rather artificial the properties of
the invariant set and the convergence behavior illustrated here are
relevant for real physical systems.  One important class of physical
systems with nontrivial invariant sets are those that possess
(non-trivial) decoherence-free subspaces (DFS).  By nontrivial we
mean here that $\EEinv$ or the DFS consists of more than one point.
A DFS $\H_{\DFS}$ is generally defined to be a subspace of the
Hilbert space $\H$ that is invariant under the dynamics and on which
we have unitary evolution.  In general this means that
$\L_D(\rho)=0$ if $\supp(\rho)\subset\H_{\DFS}$. Thus there should
exist a Hamiltonian $H$ and Lindblad operators $V_k$ such that
$H\ket{\psi}\in\H_{\DFS}$ for any $\ket{\psi}\in\H_{\DFS}$ and
$\sum_k \D[V_k]\rho=0$ for any $\rho$ with
$\supp(\rho)\subset\H_{\DFS}$. We must be careful, however, because
the decomposition of $\L_D$ is not unique and the Lindblad terms can
contribute to the Hamiltonian as we have already seen above, so the
effective Hamiltonian on the subspace may not be the same as the
system Hamiltonian without the bath.

\begin{proposition}
\label{prop:DFS} If a system governed by a LME has a DFS then
$\H_{\DFS} \subset\supp(\EEinv)$ and any state $\rho$ with
$\supp(\rho)\subset \H_{\DFS}$ belongs to $\EEinv$.
\end{proposition}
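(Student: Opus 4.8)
The plan is to use Theorem~\ref{thm:conv} as the main lever, together with the observation that a trajectory confined to a DFS is \emph{recurrent}, so it cannot merely converge to $\EEinv$ — it must already lie in $\EEinv$. Concretely, let $\rho$ be any density operator with $\supp(\rho)\subseteq\H_{\DFS}$. By the definition of a DFS, $\H_{\DFS}$ is invariant under the flow, so the solution $\rho(t)$ of~(\ref{eq:LME}) with $\rho(0)=\rho$ stays supported on $\H_{\DFS}$ for all $t\ge 0$, and there $\L_D$ acts trivially, so $\rho(t)=U(t)\rho\,U(t)^\dagger$ with $U(t)=e^{-itH_{\rm eff}}$ a one-parameter unitary group on $\H_{\DFS}$ generated by the \emph{effective} Hamiltonian $H_{\rm eff}$ (which, as noted above, need not coincide with the bare $H$). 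In particular $\vec s(t)$ is a trajectory of constant norm, since unitary evolution preserves $\Tr(\rho^2)$.

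The next step is to establish recurrence. Writing $U(t)=\sum_j e^{-i\lambda_j t}P_j$ with $\lambda_j$ the (real) eigenvalues of $H_{\rm eff}|_{\H_{\DFS}}$ and $P_j$ the corresponding spectral projectors, $t\mapsto U(t)$ is an operator-valued trigonometric polynomial, hence almost periodic; its set of $\epsilon$-almost-periods is unbounded, so there is a sequence $t_n\to+\infty$ with $U(t_n)\to\ONE$. By continuity of matrix multiplication, $\rho(t_n)=U(t_n)\rho\,U(t_n)^\dagger\to\rho$, i.e.\ the initial state $\rho$ lies in the $\omega$-limit set of its own trajectory.

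Now I would invoke Theorem~\ref{thm:conv}: every trajectory satisfies $d(\vec s(t),\EEinv)\to 0$, and $\EEinv=\EEinv^{\lin}\cap\DD_\RR(\H)$ is closed (an affine subspace intersected with the closed, bounded set of Bloch vectors of density operators), so every limit point of $\vec s(t)$ lies in $\EEinv$; in particular $\rho=\lim_n\rho(t_n)\in\EEinv$. Since $\rho$ was an arbitrary state supported on $\H_{\DFS}$, this proves the second assertion. Taking $\rho=\ket\psi\bra\psi$ for each unit vector $\ket\psi\in\H_{\DFS}$ then yields $\Span\{\ket\psi\}=\supp(\rho)\subseteq\supp(\EEinv)$, and ranging over $\ket\psi$ gives $\H_{\DFS}\subseteq\supp(\EEinv)$.

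The only non-elementary ingredient — and the step I expect to require the most care — is the recurrence of the finite-dimensional unitary flow $U(t)$ (equivalently, its almost periodicity, or the fact that the closure of $\{U(t)\}$ in the unitary group is a compact torus in which every element is recurrent); this is standard, but it is precisely what upgrades ``$\vec s(t)$ converges to $\EEinv$'' to ``$\vec s(0)\in\EEinv$''. If one prefers to avoid dynamics altogether, the same conclusion follows spectrally: by Prop.~\ref{proposition:existence} the invariant subspace $\H_{\DFS}$ contains a steady state $\rho_0$; the real vector space $W$ of traceless Hermitian operators supported on $\H_{\DFS}$ is $\A$-invariant, and since the dynamics on it is unitary, $\A|_W$ is the commutator superoperator $X\mapsto -i[H_{\rm eff},X]$, whose eigenvalues are the purely imaginary numbers $-i(\lambda_j-\lambda_k)$, so $W\subseteq\EEcc$; hence for any state $\rho$ with $\supp(\rho)\subseteq\H_{\DFS}$ one has $\vec s_\rho=\vec s_{\rho_0}+(\vec s_\rho-\vec s_{\rho_0})\in\vec s_{\rho_0}+\EEcc=\EEinv^{\lin}$, and being physical $\vec s_\rho\in\EEinv$.
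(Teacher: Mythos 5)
Your proposal is correct, but your primary route is genuinely different from the paper's. The paper argues statically: the states supported on $\H_{\DFS}$ form a face $F$ of the state space that is invariant under the flow, hence (via Prop.~\ref{proposition:existence}) contains a steady state $\vec{s}_{\ss}$; the directions $\vec{v}$ spanning $F$ form an $\A$-invariant subspace on which the dissipative part of the generator vanishes, so $\A$ acts there as the antisymmetric (Hamiltonian) part $\A_H$, whose spectrum is purely imaginary; therefore $\vec{v}\in\EEcc$ and $\vec{s}=\vec{s}_{\ss}+\vec{v}\in\EEinv^{\lin}\cap\DD_{\RR}(\H)=\EEinv$. This is exactly your fallback ``spectral'' argument, with the commutator superoperator $X\mapsto -i[H_{\rm eff},X]$ playing the role the paper assigns to $\A_H$ (your phrasing arguably makes the step cleaner, since identifying the restricted generator with a commutator is more direct than identifying it with the antisymmetric part of $\A$). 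Your main argument is instead dynamical: unitary evolution on the DFS is almost periodic, so every DFS-supported state is a limit point of its own trajectory, and Theorem~\ref{thm:conv} together with closedness of $\EEinv$ then forces that state to lie in $\EEinv$; the pure-state case gives $\H_{\DFS}\subset\supp(\EEinv)$. What this buys you is independence from any spectral analysis of the restricted generator and no need to exhibit a steady state inside the face; all you use is that the DFS flow is unitary, hence recurrent. Its price is reliance on Theorem~\ref{thm:conv} and on the (standard, and correctly flagged) recurrence lemma for finite-dimensional one-parameter unitary groups, i.e., existence of $t_n\to\infty$ with $U(t_n)\to\ONE$. Both routes are sound, and neither creates circularity, since the proof of Theorem~\ref{thm:conv} does not use this proposition.
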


\begin{proof}
If $\H_{\DFS}$ is a proper subspace of $\H$ then the states with
support on it correspond to a face $F$ at the boundary of the state
space of Bloch vectors or positive unit-trace operators $\rho$.  As
$\H_{\DFS}$ is an invariant subspace of $\H$, the face $F$ must be
invariant under the dynamics, i.e., $\A\vec{s}+\vec{c}\in F$ for any
$\vec{s}\in F$, and thus the face $F$ must contain a steady state
$\vec{s}_{\ss}$ with $\A\vec{s}_{\ss}+\vec{c}=\vec{0}$.  Moreover,
there exists a subspace $S$ of $\RR^{N^2-1}$ such that for any
$\vec{s}\in F$ we have $\vec{s}=\vec{s}_{\ss}+\vec{v}$ with
$\vec{v}\in S$.  Let $\A_H$ and $\A_D$ be the Bloch operators
associated with the Hamiltonian $H$ and dissipative dynamics.  We
can take $\A_H$ and $\A_D$ to be the anti-symmetric and symmetric
parts of $\A$, respectively.  If $\rho$ is a state with support on
$\H_{\DFS}$ then its Bloch vector $\vec{s}$ must satisfy
\begin{align*}
 \A \vec{s} + \vec{c}
 &= \A \vec{s}_{\ss} + \A\vec{v} + \vec{c} = \A \vec{v}\\
 &= (\A_H + \A_D) \vec{v} = \A_H \vec{v}
\end{align*}
for all $\vec{v}\in S$.  Due to the invariance property we have
$\A_H\vec{v}\in S$ and as $\A_H$ is a real antisymmetric matrix, it
has purely imaginary eigenvalues.  This shows that $\vec{v}$ must be
a linear combination of eigenvectors of $\A$ with purely imaginary
eigenvalues, i.e., $\vec{v}\in\EEcc$ and $\vec{s}\in\EEinv$.
\end{proof}

There are many examples of systems that have decoherence-free
subspaces. For instance, in the example above we can verify that
$\H_0$ is a DFS as $\H_0$ is invariant under the Hamiltonian
dynamics and for any $\rho$ with support on $\H_0$ we have trivially
$\D[V]\rho=0$ as $\H_0$ is the subspace of $\H$ spanned by the two
(nonorthogonal) eigenvectors of $V$ with eigenvalue $0$.  Hence,
$\rho=\sum_{k=1,2}w_k\ket{\psi_k}\bra{\psi_k}$ and $V\ket{\psi_k}=0$
for $k=1,2$ implies
\begin{equation*}
   V\ket{\psi_k}\bra{\psi_k}V^\dag
 = V^\dag V \ket{\psi_k}\bra{\psi_k}
 = \ket{\psi_k}\bra{\psi_k}V^\dag V =0
\end{equation*}
and thus $\D(V)\rho=0$.

A simpler, more physical example is a three-level $\Lambda$ system
with decay of the excited state $\ket{2}$ given by the LME with
$H=\diag(0,1,0)$ and $V_1 = \ket{1}\bra{2}$, $V_2=\ket{3}\bra{2}$.
The system has a DFS spanned by the stable ground states
$\H_{\DFS}=\mbox{\rm span}\{\ket{1},\ket{3}\}$ as we clearly have
$V_1\ket{1}=V_1\ket{3}=0$ and $V_2\ket{1}=V_2\ket{3}=0$ and thus
$V_1\ket{\psi}=V_2\ket{\psi}=0$ for all
$\psi=\alpha\ket{1}+\beta\ket{3}$ and thus
$\D[V_1]\rho=\D[V_2]\rho=0$ for any
\begin{equation}
  \label{eq:rho_DFS}
  \rho=w_1 \ket{\psi_1}\bra{\psi_1}+w_2 \ket{\psi_2}\bra{\psi_2}, \quad
  \ket{\psi_k}\in \H_{\DFS},
\end{equation}
for $k=1,2$, and $\H_{\DFS}$ is invariant under the Hamiltonian $H$.
In this case it is easy to check that the corresponding invariant
set $\EEinv$ is precisely the face $F$ at the boundary corresponding
to density operators of the form~(\ref{eq:rho_DFS}).  In fact, as
the Hamiltonian is trivial on $\H_{\DFS}$, all of the states with
support on $\H_{\DFS}$ are actually steady states, i.e.,
$\EEinv=\EEss$.  This would no longer be the case if we changed the
Hamiltonian to $H'=\ket{1}\bra{3}+\ket{3}\bra{1}$, for instance, but
$\EEinv$ would still be an invariant set.  The requirement that
$\H_{\DFS}$ be invariant under the Hamiltonian dynamics is very
important.  If we change the Hamiltonian above to
$H''=\ket{1}\bra{2}+\ket{2}\bra{1}$, for example, then the system no
longer has a DFS.  In fact, it is easy to check that the invariant
set collapses to a single point, here
$\EEinv=\EEss=\{\ket{3}\bra{3}\}$.  Other choices of the Hamiltonian
will result in different steady states.  As the states with support
on a DFS must be contained in the invariant set $\EEinv$, only
systems with non-trivial $\EEinv$ admit DFS's.

\begin{figure}
\includegraphics[width=\columnwidth]{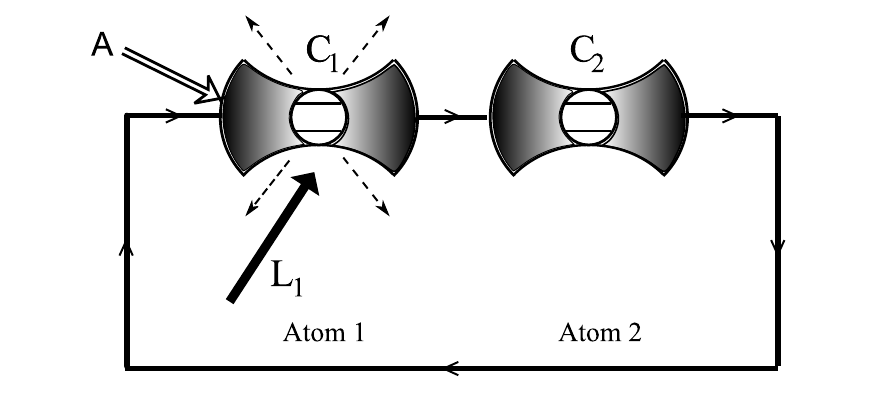}
\caption{Two atoms in separated cavities connected into a closed
loop through optical fibers. The off-resonant driving field $A$
generates an effective Hamiltonian $H_{eff}=Z_1Z_2$. Atom 1 is also
driven by a resonant laser field generating a local Hamiltonian
$X_1$. In the time scale we are interested in, only atom 1
experiences spontaneous decay. } \label{fig:one_decay_atom}
\end{figure}

Another example are two spins subject to the LME
\begin{align*}
  \dot \rho = -i\alpha[Z_1 Z_2,\rho]+\gamma_1\D[\sigma_1]\rho.
\end{align*}
where $\sigma_k$ is the decay operator for spin $k$ and $Z_k =
\sigma_k\sigma_k^\dag-\sigma_k^\dag \sigma_k$.  Here we have an
effective Ising interaction term and a decay term for the first
spin. This model might describe an electron spin weakly coupled to
stable a nuclear spin.  The same model was derived for two atoms in
separate cavities connected by optical fibers
(Fig~\ref{fig:one_decay_atom}) in the large-detuning regime
\cite{PhysRevA.70.022307,PhysRevA.80.042305}. In the latter case we
could achieve $\gamma_2\ll \gamma_1$ by choosing different $Q$
factors for the two cavities so that on certain time scale that one
atom experiences spontaneous decay while the other does not. For a
system of this type the Hilbert space has a natural tensor product
structure $\H=\H_1\otimes\H_2$ and we immediately expect the
invariant set to be $\{\rho=\ket{0}_1\bra{0}_1\otimes\rho_2\}$ as
subsystem 2 is clearly unaffected by dissipation,
$D[\sigma_1](\rho_1 \otimes \rho_2)=(\D[\sigma_1] \rho_1) \otimes
\rho_2$.  It is easy to see that $H$ is invariant on $\EEinv$, and
the states $\ket{0}\otimes \ket{\psi}$ with $\ket{\psi}\in \H_2$
also form a DFS.  However, if atom 1 is driven by a resonant laser
field $\Omega$ then the Lindblad dynamics becomes
\begin{align*}
\dot \rho = -i\alpha[ Z_1
Z_2,\rho]-i\Omega[X_1,\rho]+\gamma_1\D[\sigma_1]\rho
\end{align*}
and the DFS disappears.   The system still has a 1D manifold of
steady states but the Bloch superoperator $\A$ no longer has purely
imaginary eigenvalues $\pm i\gamma$ with $\gamma>0$, i.e., the
invariant set collapses to the 1D manifold of steady states.

\section{Conclusion}

We have theoretically investigated the convex set of the steady
states and the invariant set of the Lindblad master equation,
derived several sufficient conditions for the existence of a unique
steady state, and applied these to different physical systems.  One
interesting result is that if one Lindblad term corresponds to an
annihilation operator of the system then the stationary state is
unique.  Another useful result is that a composite system has a
unique steady state if the Lindblad equation contains dissipation
terms corresponding to annihilation operators for each subsystem. In
both cases the result still holds if other dissipation terms are
present, and regardless of the Hamiltonian. We also show that
uniqueness implies asymptotic stability of the steady state and
hence global attractivity.  On the other hand, if there are at least
two steady states, then there is a convex set of steady states, none
of which are asymptotically stable.  Furthermore, in this case even
convergence to a steady state is \emph{not} guaranteed as there can
be a larger invariant set surrounding the steady states,
corresponding to the center manifold generated by the eigenspaces of
the Bloch superoperator $\A$ with purely imaginary eigenvalues.  The
invariant set is closely related to decoherence-free subspaces; in
particular any state $\rho$ with support on a DFS belongs to the
invariant set.

This characterization of the set of steady states and the invariant
set, can be used to stabilize desired states using Hamiltonian and
reservoir engineering, and we illustate how in principle any state,
pure or mixed, can be stabilized this way.  This can be extended to
engineering decoherence-free subspaces.  The latter are naturally
attractive but attractivity of a subspace is a weak property in that
almost all initial states will generally converge to mixed state
trajectories with support on the subspace, not stationary pure
states.  One possibility of implementing such reservoir engineering
is via direct feedback, e.g., by homodyne detection, which yields a
feedback-modified master equation~\cite{pra49p2133} with Lindblad
terms depending on the measurement and feedback Hamiltonians.  This
dependence shows that feedback can change the reservoir operators,
and we have shown that in the absence of restrictions on the
control, measurement, and feedback operators, any state can be
rendered asympotically stable by means of direct feedback.  It will
be interesting to consider what states can be stabilized, for
example, given a restricted set of available measurement, control,
and feedback Hamiltonians for specific physical models.

\acknowledgments

S.G.S. acknowledges funding from EPSRC ARF Grant EP/D07192X/1, the
EPSRC QIP Interdisciplinary Research Collaboration (IRC), Hitachi,
and NSF Grant PHY05-51164. We gratefully acknowledge Francesco
Ticozzi, Howard Wiseman, Heide Narnhofer, Bernhard Baumgartner, Marj
Batchelor, and Tsung-Lung Tsai for various helpful discussions and
comments.

\appendix
\section{Existence of Steady States}
\label{app:ss1}

We can use Brouwer's fixed point theorem and Cantor's intersection
theorem to prove that any dynamical system whose flow is a
continuous map $\phi_t$ from a disk $D^n$ to itself, must have a
fixed point, and the assumption that the domain is the disk $D^n$
can be relaxed to any simple-connected compact set.  Specifically,
we have:

\begin{theorem}
Let $\dot x=f(x)$ be a dynamical system with a flow $\phi_t$ from a
simply connected compact set $D$ to itself.  If $\phi_t$ is
continuous, then there exists a fixed point.
\end{theorem}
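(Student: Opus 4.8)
The plan is to obtain the equilibrium as a \emph{common} fixed point of all the time-$t$ maps $\phi_t$, $t\ge0$, and to note that a common fixed point of the flow is automatically a zero of $f$. Fix $t>0$. Since $\phi_t:D\to D$ is continuous and $D$ is (homeomorphic to) the disk $D^n$, Brouwer's fixed point theorem makes $F_t:=\{x\in D:\phi_t(x)=x\}$ nonempty; $F_t$ is also closed, being the preimage of the diagonal under $x\mapsto(\phi_t(x),x)$, hence compact. For the stated relaxation, Brouwer already applies verbatim to any compact convex subset of $\RR^n$ --- the only case used later, since the set of density operators is convex --- while a genuinely non-convex simply connected $D$ would instead rest on the Lefschetz fixed point theorem applied to $\phi_t$, which is homotopic to $\phi_0=\mathrm{id}_D$ via $s\mapsto\phi_{st}$.

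Granting $F_t\ne\emptyset$ for all $t$, I would use the flow identity $\phi_{s+t}=\phi_s\circ\phi_t$ to collapse the family $\{F_t\}$ to a single point. Fix $T>0$ and set $t_k=T/2^k$. If $\phi_{t_{k+1}}(x)=x$ then $\phi_{t_k}(x)=\phi_{t_{k+1}}(\phi_{t_{k+1}}(x))=x$, so $F_{t_{k+1}}\subseteq F_{t_k}$; thus $\{F_{t_k}\}_{k\ge0}$ is a nested sequence of nonempty compact sets, and Cantor's intersection theorem yields a point $x^*\in\bigcap_{k\ge0}F_{t_k}$. Iterating each $\phi_{t_k}$ shows $\phi_{mt_k}(x^*)=x^*$ for every integer $m\ge0$; the numbers $mT/2^k$ are dense in $[0,\infty)$ and $t\mapsto\phi_t(x^*)$ is continuous, so $\phi_t(x^*)=x^*$ for all $t\ge0$. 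This already exhibits $x^*$ as a fixed point of the dynamical system; equivalently, differentiating at $t=0$ gives $f(x^*)=\left.\frac{d}{dt}\phi_t(x^*)\right|_{t=0}=0$.

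The bookkeeping --- closedness of $F_t$, the inclusion $F_{t_{k+1}}\subseteq F_{t_k}$, density of the dyadic times, and continuity of the flow in $t$ --- is routine and I would not belabor it. The step I expect to be the real crux is the opening one, securing $F_t\ne\emptyset$ in the asserted generality: Brouwer handles a disk or any compact convex body outright, and the Lefschetz argument covers the contractible case (there $L(\phi_t)=\chi(D)=1$), but ``simply connected compact'' taken literally is slightly too weak --- a flow can run without equilibria on the $3$-sphere realized as $SU(2)$. In writing up the result I would therefore state the hypothesis as ``$D$ a compact convex set'' (or a retract of a ball), which is exactly what the later application to the convex state space of density operators needs.
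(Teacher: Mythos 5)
Your argument is essentially identical to the paper's own proof: Brouwer's theorem applied to each time-$T$ map, the nested compact fixed-point sets $S_{T/2^k}$, Cantor's intersection theorem, and then density of the dyadic times together with continuity of the flow to get a common fixed point (hence a zero of $f$). Your closing caveat is also well taken --- ``simply connected compact'' taken literally is too weak (the Hopf flow on $S^3$ has no equilibria), but since the application is to the compact convex set of density operators, the conclusion the paper needs is unaffected.
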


\begin{proof}
For any given $T>0$, $\phi_T:D\to D$ is a continuous map from $D$ to
itself.  Applying Brouwer's fixed point theorem, there exists at
least one fixed point.  Denote the set of fixed points as $S_T$ and
observe that as a closed subset of a compact set $S_T$ is compact.
Similarly, we can find the set of fixed points $S_{T/2}$ for
$\phi_{T/2}$, which is also compact and satisfies
$S_{\frac{t}{2}}\subset S_T$ as a fixed point of $\phi_{T/2}$ is
also a fixed point of $\phi_T$.  By iterating this procedure we can
construct a sequence of nonempty compact netting sets
$\{S_{T/2^k}\}:\cdots \subset S_{T/2^k}\subset S_{T/2^{k-1}} \subset
\cdots S_{\frac{T}{2}}\subset S_T$.  By Cantor intersection theorem,
the intersection of $\{S_{T/2^k}\}$ is nonempty.  Let $x_0$ be one
of the points in the intersection.  Then for any $T'=nT/2^k$, we
have $\phi_{T'}(x_0)=x_0$. Since such $T'$ is dense for
$[0,+\infty)$ and $\phi_t$ a continuous flow, we know that for any
time $t$, $\phi_t(x_0)=x_0$, i.e. $x_0$ is a fixed of the dynamical
system.
\end{proof}

Since the set of physical states is a compact simply connected set
and the master equation clearly continuous, we can conclude that any
system governed by a Lindblad master equation has a physical
stationary state.

\section{Extremal points of Convex Set of Steady States}
\label{app:extremals}

\begin{lemma}
\label{lemma:app1} If $\rho_s = s\rho_0+(1-s)\rho_1$ is a convex
combination of the positive operators $\rho_0$, $\rho_1$ with
$0<s<1$ then $\rank\rho_s$ is constant and the support of $\rho_0$
and $\rho_1$ is contained in the support of $\rho_s$.
\end{lemma}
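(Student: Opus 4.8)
The plan is to reduce everything to a computation of the kernel of $\rho_s$, using positivity in an essential way. First I would record the standard fact that for a positive semidefinite operator $\rho$ on $\H$ and any $\ket{\psi}\in\H$, the scalar $\bra{\psi}\rho\ket{\psi}$ vanishes if and only if $\rho\ket{\psi}=0$: writing $\rho=\rho^{1/2}\rho^{1/2}$ gives $\bra{\psi}\rho\ket{\psi}=\norm{\rho^{1/2}\ket{\psi}}^2$, which is zero exactly when $\rho^{1/2}\ket{\psi}=0$, hence exactly when $\rho\ket{\psi}=0$. In particular $\ker\rho$ coincides with the null set of the quadratic form $\ket{\psi}\mapsto\bra{\psi}\rho\ket{\psi}$, and $\supp(\rho)=(\ker\rho)^\perp$.

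The key step is then the identity $\ker\rho_s=\ker\rho_0\cap\ker\rho_1$ for every $s\in(0,1)$. This follows because, for any $\ket{\psi}$,
\[
  \bra{\psi}\rho_s\ket{\psi}=s\,\bra{\psi}\rho_0\ket{\psi}+(1-s)\,\bra{\psi}\rho_1\ket{\psi}
\]
is a sum of two nonnegative reals with strictly positive coefficients, so it vanishes if and only if both terms do, i.e. if and only if $\ket{\psi}\in\ker\rho_0\cap\ker\rho_1$. Since the right-hand side does not depend on $s$, neither does $\ker\rho_s$, and hence $\rank\rho_s=N-\dim\ker\rho_s$ is constant on $(0,1)$. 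Taking orthogonal complements, $\supp(\rho_s)=(\ker\rho_0\cap\ker\rho_1)^\perp=\supp(\rho_0)+\supp(\rho_1)$, which contains both $\supp(\rho_0)$ and $\supp(\rho_1)$, as claimed.

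I do not anticipate a genuine obstacle here; the only points requiring care are the positivity argument identifying $\ker\rho$ with the zero set of the form, and the explicit use of the strict inequality $0<s<1$. At the endpoints the statement fails — e.g. $\rho_1$ may have strictly larger support than $\rho_0=\rho_s|_{s=1}$ — so it is precisely the positivity of \emph{both} weights that forces $\supp(\rho_s)$ to be the full sum $\supp(\rho_0)+\supp(\rho_1)$ throughout the open segment.
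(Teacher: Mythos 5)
Your proof is correct, and it takes a more direct route than the paper. The paper fixes one interior point $\rho_s$, diagonalizes it, and uses the fact that a positive operator with a vanishing diagonal entry must have the whole corresponding row and column equal to zero; since $s,1-s>0$, the zero rows and columns of $\rho_s$ force the same zeros in $\rho_0$ and $\rho_1$, giving $\supp(\rho_0),\supp(\rho_1)\subset\supp(\rho_s)$. Constancy of the rank is then obtained indirectly: if $\rank(\rho_s)<\rank(\rho_t)$ for two interior points, one writes the smaller-rank state as a convex combination of an endpoint and the larger-rank state and derives a contradiction from the containment just proved. You instead identify the kernel outright, $\ker\rho_s=\ker\rho_0\cap\ker\rho_1$, via the quadratic-form characterization $\bra{\psi}\rho\ket{\psi}=0\Leftrightarrow\rho\ket{\psi}=0$ for positive $\rho$; both conclusions then follow in one step, since the kernel is manifestly independent of $s\in(0,1)$ and taking orthogonal complements gives the sharper statement $\supp(\rho_s)=\supp(\rho_0)+\supp(\rho_1)$, of which the claimed containment is a corollary. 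What your argument buys is brevity, the avoidance of the two-point contradiction argument, and an exact description of the support rather than a mere inclusion; what the paper's version buys is an explicit block picture of $\rho_0,\rho_1$ in the eigenbasis of $\rho_s$, which matches the block-partition computations used elsewhere (e.g., in Proposition~\ref{thm:ss_bd} and Theorem~\ref{thm:Hs_decomp}). Both rest on the same essential use of strict positivity of the weights, and your observation that the statement fails at the endpoints correctly locates where that hypothesis enters.
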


\begin{proof}
If $\rank(\rho_s)=k$ then there exists a basis such that
$\rho_s=\diag(r_1,\ldots,r_k,0,\ldots)$ with $r_\ell\ge 0$ and
$\sum_{\ell=1}^k r_k=1$, i.e., the last $N-k$ rows and columns of
$\rho_s$ are $0$.  Since $\rho_0$ and $\rho_1$ are positive
operators and $s>0$, this is possible only if the last $N-k$ rows
and columns of $\rho_0$ and $\rho_1$ are zero, and thus the support
of $\rho_0$ and $\rho_1$ is contained in the support of $\rho_s$.
Furthermore, the rank of all $\rho_s$ on the open line segment
$0<s<1$ must be the same.  If there were two intermediate points
with $\rank(\rho_s)< \rank(\rho_t)$ and $0<s<t<1$ then the support
of $\rho_0$ and $\rho_t$ would have to be contained in the support
of $\rho_s$ by the previous argument, which is impossible as
$\rank(\rho_s)<\rank(\rho_t)$. Similarly, for $0<t<s<1$.
\end{proof}

\begin{theorem}
\label{thm:Hs_decomp} Let $\H_s$ be the smallest subspace of $\H$
that contains the support of all steady states.  There exist a
finite number of extremal steady states $\rho_k$ such that $\EEss$
is the convex hull of $\{\rho_k\}$ and $\H_s=\oplus_k
\supp(\rho_k)$.
\end{theorem}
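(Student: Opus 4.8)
The plan is to build the statement from three pieces: a general fact about compact convex sets, an identification of $\H_s$ with the support of a maximal-rank steady state, and a greedy procedure that peels off extremal steady states one minimal invariant subspace at a time. For the first piece, note that $\EEss$ is closed (it is the solution set of $\A\vec s+\vec c=\vec 0$ intersected with the compact state space) and convex (by linearity of $\L$, as already observed), hence a compact convex subset of the finite-dimensional real vector space of Hermitian operators on $\H$. Minkowski's theorem (the finite-dimensional Krein--Milman theorem) then gives that $\EEss$ is the convex hull of its set of extremal points, and Carath\'{e}odory's theorem shows that every steady state is a convex combination of at most $N^2$ of them. The substantive remaining part is the \emph{orthogonal} direct-sum decomposition $\H_s=\bigoplus_k\supp(\rho_k)$ with the $\rho_k$ extremal, and the rest of the argument is aimed at that.

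For the second piece, observe that the solution set of $\A\vec s+\vec c=\vec 0$ is affine, so any affine combination of steady states is a steady state; in particular averages of steady states are steady states. Let $\rho^\star$ be a steady state of maximal rank. For any steady state $\sigma$, the steady state $\tfrac12(\rho^\star+\sigma)$ has support containing $\supp(\rho^\star)$, hence rank $\rank(\rho^\star)$ by maximality, and then Lemma~\ref{lemma:app1} gives $\supp(\sigma)\subseteq\supp(\tfrac12(\rho^\star+\sigma))=\supp(\rho^\star)$; therefore $\H_s=\supp(\rho^\star)$. I would also record here that a steady state $\rho_0$ of \emph{minimal} rank is extremal: if some steady state $\sigma\neq\rho_0$ were supported in $\supp(\rho_0)$, then the unit-trace Hermitian path $t\mapsto(1+t)\rho_0-t\sigma$ stays positive and supported in $\supp(\rho_0)$ for small $t>0$, consists of steady states, and loses rank at some finite $t>0$, contradicting minimality; so by the characterization of extremal steady states as those unique in their support, $\rho_0$ is extremal.

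For the third piece, I would first reduce to the case of a \emph{faithful} steady state: applying Proposition~\ref{thm:ss_bd} to $\rho^\star$, in a basis adapted to $\H_s\oplus\H_s^\perp$ each $V_d$ is block upper-triangular, the compression of the Lindblad generator to operators on $\H_s$ is again of Lindblad form, and --- since every steady state is supported on $\H_s$ and the off-diagonal constraints of~(\ref{eq:ss_bd2}) are inherited from $\rho^\star$ --- its steady states coincide with those of the original system. So we may assume $\H_s=\H$ with $\rho^\star$ full rank. Now run the induction: pick a minimal-rank steady state $\rho_1$, extremal by the previous step, with $\S_1=\supp(\rho_1)$ an invariant subspace (the support of a steady state is invariant); if $\S_1\neq\H$, produce an extremal steady state $\rho_2$ with $\supp(\rho_2)\perp\S_1$, replace $\S_1$ by $\S_1\oplus\supp(\rho_2)$ (again the support of a steady state, hence invariant), and repeat. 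Finiteness of $\dim\H$ and the strict increase of the dimension at each step force termination with $\H=\bigoplus_{k=1}^K\supp(\rho_k)$, $\rho_k$ extremal; together with the first paragraph this is the theorem.

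The hard part is the one assertion used in the inductive step: given an extremal steady state $\rho_1$ with proper invariant support and a faithful steady state $\rho^\star$ for the whole system, there is an extremal steady state whose support is \emph{orthogonal} to $\supp(\rho_1)$. Producing a direct-sum complement that is merely another minimal invariant subspace is easy; obtaining orthogonality is not, because the orthogonal complement of an invariant subspace need not itself be invariant --- the off-diagonal block of $V_d$ in the block-triangular form need not vanish. I would attack this using the faithful steady state: either elementarily, by pushing $(1+t)\rho^\star-t\rho_1$ to the boundary and using the block structure of Proposition~\ref{thm:ss_bd} (the $\H_s^\perp$-block of $\rho^\star$ stays strictly positive) to pin down the subspace on which rank is lost, or structurally, by invoking that in the presence of a faithful stationary state the fixed-point set of the adjoint generator $\L^\dagger$ is a $*$-subalgebra of the operators on $\H$, whose block decomposition supplies a canonical family of mutually orthogonal minimal invariant subspaces summing to $\H_s$. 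This orthogonalization is where the real content lies; everything else is bookkeeping resting on Lemma~\ref{lemma:app1}, Proposition~\ref{thm:ss_bd}, and the affine structure of the steady-state set.
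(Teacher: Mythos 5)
Your skeleton---Krein--Milman/Carath\'eodory for the convex-hull part, the identification $\H_s=\supp(\rho^\star)$ via averaging with a maximal-rank steady state and Lemma~\ref{lemma:app1}, the observation that a minimal-rank steady state is extremal, the reduction to a faithful steady state on $\H_s$, and an induction that adjoins one extremal steady state with orthogonal support at a time---is sound and is essentially the same architecture as the paper's proof. But the single step you defer is the entire substance of the theorem, and neither of your two proposed attacks actually delivers it. What is needed is: if $\S\subsetneq\H_s$ is the (invariant) support of a steady state, then there is a steady state supported in the orthogonal complement of $\S$ inside $\H_s$---equivalently, that this complement is itself invariant. Your ``elementary'' route, pushing $(1+t)\rho^\star-t\rho_1$ to the boundary of positivity and using strict positivity of the $\S^\perp$-corner of $\rho^\star$ to ``pin down the subspace on which rank is lost,'' fails as a purely convex-geometric argument: at the critical parameter the limiting steady state indeed has full-rank compression to $\S^\perp$ (since the subtracted state has no $\S^\perp$-corner), but its support is in general tilted, neither contained in $\S^\perp$ nor orthogonal to $\S$; already for a $3\times3$ positive matrix with a coherence between $\S$ and $\S^\perp$ the kernel vector acquired at the boundary has components in both subspaces. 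Orthogonality cannot come from the geometry of the segment alone; it must come from the stationarity equations.

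That dynamical input is exactly the computation the paper performs and your proposal never does: with $P$ the projector onto the invariant subspace built so far, invariance gives $V_{21}^{(d)}=0$ and $iH_{12}=\tfrac12\sum_d (V_{11}^{(d)})^\dag V_{12}^{(d)}$; then writing out the $P^\perp$-corner of $\dot\rho=0$ for a steady state whose compression to that corner has full rank (the paper uses the newly adjoined extremal state inside $\H_1=\H_0+\supp(\rho_k)$), the cross terms collapse to $\rho_{22}V_{12}^\dag V_{12}+V_{12}^\dag V_{12}\rho_{22}=0$, forcing $V_{12}^{(d)}=0$, hence $[H,P]=[V_d,P]=0$, so the orthogonal complement is invariant and by Proposition~\ref{proposition:existence} carries a steady state, hence an extremal one with support orthogonal to what was already built. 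Your second route---Frigerio's theorem that a faithful stationary state makes the fixed-point set of $\L^\dag$ a $*$-algebra whose decomposition yields mutually orthogonal invariant subspaces---could be made to work, but as written it is an appeal to an unproved external structure theorem of essentially the same depth as the claim, and you do not carry out the translation from that algebra's block structure to extremal steady states with orthogonal supports summing to $\H_s$. Until one of these is actually executed, the proof has a genuine gap at its central step; everything you do prove is correct bookkeeping around it.
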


\begin{proof}
We know that a convex set is the convex hull of its extremal points
but there may be many extremal points with nonorthogonal supports.
Thus, what we need to show is that we can always choose a subset of
the extremal points with mutually orthogonal supports that generates
the entire convex set of steady states.  Given two extremal steady
states $\rho_1$, $\rho_2$, either $\supp(\rho_1)\perp\supp(\rho_2)$,
or we can find another steady state $\rho_3$ with
$\supp(\rho_2)\subset \supp(\rho_1)+\supp(\rho_2)$ such that
$\supp(\rho_1) \perp \supp(\rho_3)$.  Assuming we have already
constructed $\H_0= \oplus_\ell^{k-1} \supp(\rho_\ell)$ with
$\supp(\rho_\ell)$ mutually orthogonal, let $\rho_k$ be another
extremal point with $\supp(\rho_k)$ not included in $\H_0$ and
define $\H_1=\H_0+\supp(\rho_k)$.  By connecting $\rho_k$ and a
fixed point with full rank in $\H_0$, we can find another steady
state with full rank in $\H_1$.  So $\H_1$ is an invariant subspace
under the dynamics on $\H$.  Therefore, in the following, we will
restrict the dynamics $H$ and $V_k$ on the subspace $\H_1$.  Define
$P$ to be the projection operator of $\H_0$ and $P^\perp$ to be the
orthogonal projection operator of $P$ with respect to $\H_1$. In the
block-diagonal diagonal form with respect to $P$ and $P^\perp$,

\begin{align*}
\rho_k=\begin{bmatrix}
\rho_{11} & \rho_{12} \\
\rho_{21} & \rho_{22}
\end{bmatrix},
H=\begin{bmatrix}
H_{11} & H_{12} \\
H_{21} & H_{22}
\end{bmatrix},
V=\begin{bmatrix}
V_{11} & V_{12} \\
V_{21} & V_{22}
\end{bmatrix},
\end{align*}
where without loss of generality we only consider one Lindblad term.
Since $\H_0$ is an invariant subspace under the dynamics, we have
\begin{subequations}
\label{fixed-condi}
\begin{align}
0 &= V_{21}\\
0 &= -\frac{1}{2} \sum_j V_{11}^\dag V_{12}+i H_{12}
\end{align}
\end{subequations}

For a steady state $\rho_k$, we have
$0=\dot\rho=-i[H,\rho_k]+\D[V]\rho$.  Since $\rho_k$ is an extremal
point, $\rho_{22}$ has full rank.  Moreover, as $\rho_k$ is
stationary in $\H_1$, it is also stationary restricted to a subspace
$P^\perp\H_1$, which means
\begin{align*}
0=-i[H_{22},\rho_{22}]+\D[V_{22}]\rho_{22}.
\end{align*}
Substituting this as well as (\ref{fixed-condi}) into
$\dot\rho_{22}=0$, we find
\begin{align*}
\rho_{22} V_{12}^\dagger V_{12}+V_{12}^\dagger V_{12}\rho_{22}=0
\end{align*}
which means $V_{12}=0$ since $\rho_{22}$ has full rank.  Together
with (\ref{fixed-condi}) we have $[H,P]=[V,P]=0$. Hence $P^\perp
\H_1$ is also an invariant space under the dynamics restricted on
$\H_1$. Combining the condition that $\H_1$ is an invariant subspace
under the dynamics on $\H$, we conclude that $P^\perp \H_1$ is also
an invariant subspace under the dynamics on $\H$.  There must exist
an extremal fixed point $\bar\rho_k$ in $P^\perp \H_1$ with support
orthogonal to $\H_0$.  We have
$\bar\H_1=H_0\oplus\supp(\bar\rho_k)$. Continuing this process until
all fixed points are included in $\oplus_k \supp(\rho_k)$, we
finally obtain $\H_s=\oplus_k \supp(\rho_k)$.  This construction can
be completed in a finite number of steps as the dimension of $\H_s$
is finite.
\end{proof}

\section{Proof of ``No Isolated Centers'' Theorem}
\label{app:no_iso_centers}

Suppose $\A$ has a pair of purely imaginary eigenvalues $\pm i
\alpha$. Let $\vec{e}$ be an eigenvector of $\A$ corresponding to
the eigenvalue $+i\alpha$ with $\alpha>0$, i.e.,
$\L_{tot}(\vec{e})=\A \vec{e}=i\alpha \vec{e}$. In the Schrodinger
picture, we have $e^{t\A}\vec{e} =e^{i\alpha t}\vec{e}$. Let $E$ be
the operator, corresponding to $\vec{e}$, in the adjoint operator
space, with $\L_{tot}^\dag(E)=i\alpha E$. In the Heisenberg picture,
the adjoint dynamics gives $E(t)=e^{i\alpha t}E$, and $E(t)^\dag
E(t)=E^\dag E$, with $E^\dag E$ always positive. We can scale $E$
such that $\norm{E^\dag E}_\infty=1$. Thus, $E^\dag E$ is a positive
matrix with maximum eigenvalue $\lambda_{\max}=1$ and $\ket{\phi_0}$
as the associated eigenvector. Hence, we have $E^\dag
E\ket{\phi_0}=\ket{\phi_0}$ and $\Tr(E^\dag
E\rho_0)=\Tr(\lambda_{\max}\rho_0)=1$, where
$\rho_0=\ket{\phi_0}\bra{\phi_0}$. Let us consider in the
Schrodinger picture, the evolution of $\rho(t)$ with initial state
$\rho(0)=\rho_0$. We define the average state
\begin{equation}
  \bar\rho(T) = \frac{1}{T} \int_0^T \rho(t) \, dt.
\end{equation}
Setting $D=E^\dag E$, switching between the Schrodinger and
Heisenberg picture, and using the Kadison inequality $D(t) \ge
E(t)^\dag E(t)$ we obtain
\begin{align*}
  \Tr(\bar\rho D)
  &= \frac{1}{T}\int_0^T \Tr[\rho(t) D] \, dt       \\
  &= \frac{1}{T}\int_0^T \Tr[\rho D(t)] \, dt       \\
  &\ge \frac{1}{T}\int_0^T \Tr[\rho_0 E(t)^\dag E(t)] \, dt \\
  &= \frac{1}{T}\int_0^T \Tr[\rho_0 E^\dag E] \, dt       \\
  &= \Tr(\rho_0 E^\dag E) = \norm{E^\dag E}_\infty = 1.
\end{align*}
On the other hand, we have $\Tr(\bar\rho E^\dag E)\le 1$, and thus
$\Tr(\bar\rho E^\dag E)=1$.

If the unique steady state $\rho_{\ss}$ is in the interior of the
convex set of physical states, i.e., $\rho_{\ss}$ has full rank,
then $\bar\rho$ must have full rank for sufficiently large $T$ as
well, and this is possible only if $E^\dag E=\ONE$, i.e., $E$ is
unitary. Next we calculate the term $E^\dag \L_{tot}^\dag(E)$. From
the evolution:
\begin{align*}
\L_{tot}^\dag (E)=[iH,E]+\sum_j \Big(V_j^\dag E
V_j-\frac{1}{2}(EV_j^\dag V_j+V_j^\dag V_jE)\Big),
\end{align*}
we have
\begin{equation}
 \label{eqn:lind_adjoint}
 \begin{split}
  E^\dag \L_{tot}^\dag (E) =& E^\dag [iH,E]\nonumber \\
  &+ \sum_j \Big(E^\dag V_j^\dag E V_j-\frac{1}{2}V_j^\dag
V_j-\frac{1}{2}E^\dag
 V_j^\dag V_jE\Big)
\end{split}
\end{equation}
Since $E$ is unitary, we have $E^\dag \L_{tot}^\dag (E)=E^\dag
i\alpha E=i \alpha I$. Taking the trace at both sides in
(\ref{eqn:lind_adjoint}),
\begin{align*}
\sum_j \Tr(E^\dag V_j^\dag E V_j)
 =&\frac{1}{2}\sum_j \left[\Tr(E^\dag EV_j^\dag V_j)
   +\Tr(E^\dag V_j^\dag V_jE)\right]\\
  &+i\alpha N
\end{align*}
On the other hand, by Cauchy Schwartz inequality,
\begin{align*}
&|\sum_j\Tr(E^\dagger V_j^\dagger E V_j)|\\
&\le \sum_j|\Tr\big((E^\dagger V_j^\dagger E) (V_j)\big)|\\
&\le \sum_j\sqrt{\Tr( E^\dagger V_j^\dagger EE^\dagger V_j
E)}\sqrt{\Tr( V_j^\dagger
V_j)}\\
&\le \sum_j \frac{1}{2}\Big[\Tr(E^\dagger V_j^\dagger V_j E)+\Tr(
V_j^\dagger V_j)\Big]
\end{align*}
Therefore, we must have $\alpha=0$, which contradicts the initial
assumption that $\alpha>0$.

When the unique steady state is a mixed state at the boundary with
$1<\rank(\rho_{\ss})<N$, then we can partition the Hilbert space
$\H=\H_1\oplus\H_2$ such that $\rho_{\ss}$ vanishes on $\H_2$.  It
has been shown that in this case all solutions are attracted to
states with support on $\H_1$.  Thus, we can restrict the dynamics
to $\H_1$, i.e., the support of $\rho_{\ss}$, and the same arguments
as above imply that $E^\dag E$ must equal the identity on the $\H_1$
subspace, $E^\dag E|_{\H_1}=\ONE_{\H_1}$, which leads to a
contradiction.

If the unique fixed point $\rho_{\ss}$ happens to be a pure state at
the boundary, then it is easy to see that there cannot be any loop
paths, because the state $\bar\rho$ averaged over one period would
have to equal the stationary state $\rho_{\ss}$, which is not
possible because a rank $1$ projector cannot be written as a linear
combination of other states.

Moreover, if $\vec{s}_{\ss}$ is a center that belongs to a face $F$
in the boundary, then the entire center manifold it belongs to must
be contained in $F$ as otherwise there would be loop planes
intersecting the boundary and physical states evolving into
non-physical states, which is forbidden.

%\bibliography{bibtex_2008}{}
\end{document}